\documentclass{article}

\usepackage{microtype}
\usepackage{graphicx}
\usepackage{subcaption}
\usepackage{booktabs} % for professional tables

\usepackage{hyperref}

\usepackage[preprint]{icml2026}

\usepackage{amsmath}
\usepackage{amssymb}
\usepackage{mathtools}
\usepackage{amsthm}

\usepackage[capitalize,noabbrev]{cleveref}

\theoremstyle{plain}
\newtheorem{theorem}{Theorem}[section]
\newtheorem{proposition}[theorem]{Proposition}

\theoremstyle{definition}

\theoremstyle{remark}

\usepackage[textsize=tiny]{todonotes}

\usepackage{multirow}
\usepackage{algorithm}
\usepackage{algorithmic}
\definecolor{codecomment}{RGB}{63,127,127}
\renewcommand{\algorithmiccomment}[1]{\textcolor{codecomment}{\#\ #1}}

\usepackage{tikz}
\usetikzlibrary{arrows.meta, positioning}

\usepackage{lscape}
\usepackage{caption} % for \captionof
\usepackage[normalem]{ulem}
\usepackage{wrapfig}

\newcommand{\SthreeRec}{S\textsuperscript{3}-Rec}

\icmltitlerunning{ReSID: information theoretic redesign of semantic tokenization for generative recommenders}

\begin{document}

\twocolumn[
  \icmltitle{Rethinking Generative Recommender Tokenizer:\\ Recsys-Native Encoding and Semantic Quantization Beyond LLMs}

  \icmlsetsymbol{equal}{*}

  \begin{icmlauthorlist}
    \icmlauthor{Yu Liang}{equal,csu}
    \icmlauthor{Zhongjin Zhang}{equal,csu}
    \icmlauthor{Yuxuan Zhu}{shopee-sh}
    \icmlauthor{Kerui Zhang}{shopee-sh}
    \icmlauthor{Zhiluohan Guo}{shopee-sh}
    \icmlauthor{Wenhang Zhou}{shopee-sh}
    \icmlauthor{Zonqi Yang}{shopee-sh}
    \icmlauthor{Kangle Wu}{shopee-sh}
    \icmlauthor{Yabo Ni}{ntu}
    \icmlauthor{Anxiang Zeng}{ntu}
    \icmlauthor{Cong Fu}{shopee-sg}
    \icmlauthor{Jianxin Wang}{csu}
    \icmlauthor{Jiazhi Xia}{csu}
  \end{icmlauthorlist}

  \icmlaffiliation{csu}{Central South University, Changsha, China}
  \icmlaffiliation{shopee-sh}{Shopee Pte. Ltd., Shanghai, China}
  \icmlaffiliation{shopee-sg}{Shopee Pte. Ltd., Singapore, Singapore}
  \icmlaffiliation{ntu}{Nanyang Technological University, Singapore, Singapore}

  \icmlcorrespondingauthor{Jiazhi Xia}{xiajiazhi@csu.edu.cn}

  \icmlkeywords{Representation Learning,Discrete Representations,Information-Theoretic Learning,Masked Autoencoders,Quantization,Generative Recommender; Semantic IDs; Recommendation Systems}

  \vskip 0.3in
]

% \printAffiliationsAndNotice{}
\printAffiliationsAndNotice{\icmlEqualContribution}
\begin{abstract}
Semantic ID (SID)-based recommendation is a promising paradigm for scaling sequential recommender systems, but existing methods largely follow a semantic-centric pipeline: item embeddings are learned from foundation models and discretized using generic quantization schemes. This design is misaligned with generative recommendation objectives: semantic embeddings are weakly coupled with collaborative prediction, and generic quantization is inefficient at reducing sequential uncertainty for autoregressive modeling. To address these, we propose \textbf{ReSID}, a recommendation-native, principled SID framework that rethinks representation learning and quantization from the perspective of information preservation and sequential predictability, \textbf{without} relying on LLMs. ReSID consists of two components: (i) \textbf{Field-Aware Masked Auto-Encoding (FAMAE)}, which learns predictive-sufficient item representations from structured features, and (ii) \textbf{Globally Aligned Orthogonal Quantization (GAOQ)}, which produces compact and predictable SID sequences by jointly reducing semantic ambiguity and prefix-conditional uncertainty. Theoretical analysis and extensive experiments across \textbf{ten} datasets show the effectiveness of ReSID. ReSID consistently outperforms strong sequential and SID-based generative baselines by an average of over \textbf{10\%}, while reducing tokenization cost by up to \textbf{122$\times$}. Code is available at \url{https://github.com/FuCongResearchSquad/ReSID}.
\end{abstract}
\section{Introduction}
Generative recommendation based on Semantic IDs (SIDs) has emerged as a promising approach for scaling recommender systems beyond conventional item-ID modeling~\cite{onerec}. The key idea is to encode each item as a compact sequence of discrete tokens (e.g., $[21,3,54]$), enabling autoregressive decoding, token-by-token, instead of predicting billions of atomic, unrelated item-IDs.

\begin{figure}[t]
    \centering
    \includegraphics[width=1\linewidth]{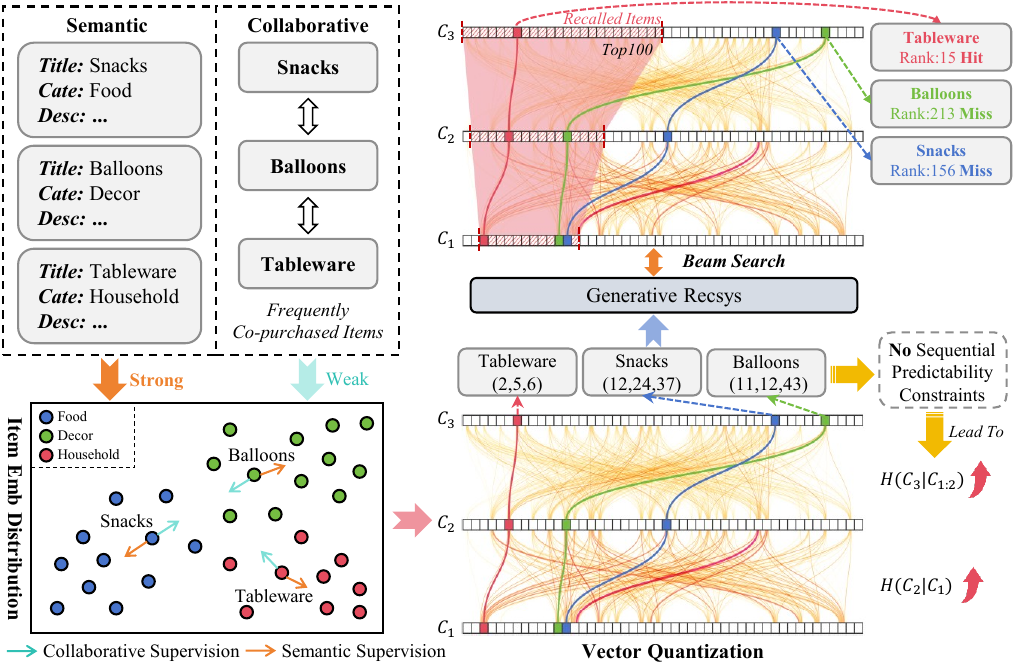}
    \vspace{-2mm}
    \caption{
    Illustration of a traditional semantic-centric SID-based generative recommendation pipeline. Item representations learned from foundation models are weakly aligned with collaborative prediction, and subsequent quantization does not account for sequential predictability in SID decoding, leading to high decoding uncertainty and suboptimal recommendation performance.
    }
    \label{fig:intro_motivation}
\end{figure}

Most existing SID pipelines follow a \emph{semantic-centric} design~\cite{tiger,letter,eager,unger}. Items are first embedded using (M)LLMs, then discretized via vector quantization (e.g., RQ-VAE~\cite{rqvae} or Hierarchical K-Means~\cite{vocabularyTree}), and finally used as tokens in generative recommenders. While effective for vocabulary compression, such pipelines introduce a fundamental \emph{mismatch between SID tokenization and downstream recommender modeling}, manifesting in two core limitations (illustrated in Figure~\ref{fig:intro_motivation}).

\textbf{(1) Misalignment in representation extraction.}
Foundation models are primarily optimized for semantic similarity, which often conflicts with collaborative signals: Items that frequently co-occur in user behaviors (e.g., snacks and balloons for parties) may be far apart in semantic or visual properties. Even with collaboration-oriented fine-tuning~\cite{lcrec,letter,eager,unger}, semantic and collaborative objectives inherently impose competing geometric constraints. In practice, this leads to a compromised embedding structure that is neither semantically clean nor optimally aligned with recommender objectives.

Further, continuously injecting collaborative signals into large semantic encoders incurs substantial training costs~\cite{onerec,onesearch}, and there is \textbf{no principled way to assess} whether the learned embeddings are suitable for recommendation-oriented SID tokenization, since they are optimized indirectly via semantic objectives.

\textbf{(2) Existing quantization methods weaken sequential predictability.}
Existing methods typically emphasize either reconstruction fidelity or hierarchical structures, but fail to jointly consider both. In hierarchical encoding pipelines~\cite{eager,unger,seater}, child indices are often assigned \emph{locally}, \emph{arbitrarily}, and \emph{independently} under each parent. For example, items with substantially different semantics may share the same second-level token ``1'' in codes $(2,1,5)$ and $(9,1,7)$. Since the embedding associated with token ``1'' is shared across all items assigned to that index, this leads to multimodality and high semantic ambiguity, thus lowering the reconstruction fidelity (see theoretical analysis in Section~\ref{sec:method_gaoq}). Conversely, reconstruction-driven quantizers~\cite{tiger,etegrec,forge,onesearch} such as RQ-VAE or RQ-Kmeans optimize the reconstruction error but are agnostic to sequential index dependency across code levels. As a result, the two design choices produce SID sequences that either suffer from high reconstruction error (information loss) or are unfriendly to autoregressive modeling, degrading downstream generative recommendation.

Diverging from the \emph{semantic-centric} paradigm, we rethink representation learning and quantization from an \emph{information-theoretic} perspective. In the representation learning stage, ReSID maximizes the mutual information between a collaboratively sequential-contextualized representation and the semantic-rich target item features, ensuring that the task-relevant information is preserved for downstream recommendation. In the quantization stage, ReSID jointly minimizes reconstruction entropy of the discrete codes and prefix-conditional entropy along the SID sequence using non-parameterized methods, thereby improving predictability in SID-based generative recommendation while reducing computational costs. Our contributions are summarized as follows:

\emph{(1) Recommendation-native representation learning.}
We introduce \emph{Field-Aware Masked Auto-Encoding (FAMAE)}, which learns item embeddings by predicting masked structured features conditioned on the user history. Guided by our information-theoretic analysis, FAMAE preserves recommendation-sufficient information for SIDs and enables intrinsic, task-aware metrics for embedding quality.

\emph{(2) Objective-aligned SID quantization.}
We propose \emph{Globally Aligned Orthogonal Quantization (GAOQ)}, which jointly reduces reconstruction errors and prefix-conditional uncertainty in SIDs. By enforcing globally consistent indexing across hierarchical levels, GAOQ produces compact codes that improve predictability in sequential decoding.

\emph{(3) Strong and consistent empirical results.}
Across \textbf{ten} public datasets, ReSID consistently outperforms strong sequential recommenders and SOTA SID-based generative models, achieving over 10\% relative improvement while reducing tokenization costs by up to 122$\times$ on million-scale datasets. These results show that effective SID construction does not require heavy foundation models, enabling a scalable and adaptable solution for large-scale systems.

\begin{figure*}[t]
    \centering
    \includegraphics[width=0.9\textwidth]{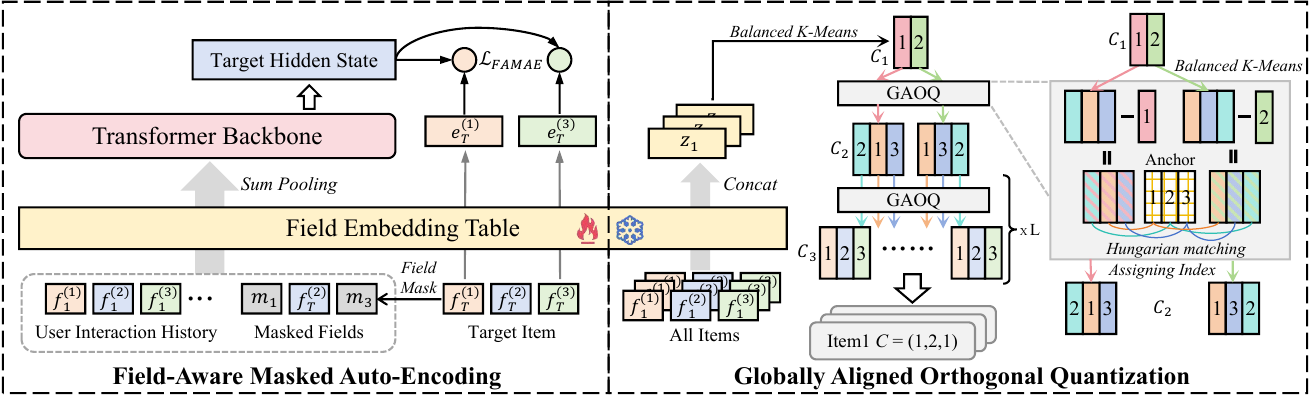}
    \caption{
    Overview of ReSID. FAMAE learns recommendation-sufficient field-level item representations via masked field prediction, and GAOQ discretizes them into compact, autoregressive-decoding-friendly SIDs via global alignment.
    }
    \label{fig:method_resid}
\end{figure*}

\section{Rethinking SID Pipeline} \label{sec:rethinking_sid_pipeline}

\subsection{Problem Definition and Notations}
We consider a sequential recommendation setting that predicts the target item $i_T$ given the user's history $H=(i_1,\dots,i_{T-1})$, where the index $t$ denotes the chronological order of interactions. Each item at position $t$ is associated with a set of raw metadata $X_t$ (e.g., text, images, or other unstructured contents) and a set of structured feature fields $F_t=\{f_t^{(1)},\dots,f_t^{(J)}\}$ engineered from the interaction data and metadata $X_t$, where $f_t^{(k)}$ denotes the $k$-th feature field of $i_t$. A SID-based generative recommender extends this setting by replacing $i_t$'s identifier with a finite-length SID sequence $C_{t}=(c_1,\dots,c_L)$. It mainly includes three components:
(i) an encoder $E_{\theta}$, parameterized by $\theta$, which maps $X_t$ of item $i_t$ to a continuous representation $\mathbf{z}_t$; (ii) a quantizer $Q$, which discretizes $\mathbf{z}_t$ into $C_{t}$; and (iii) a generative model $G_{\phi}$, parameterized by $\phi$, which predicts the target SID, $Y=C_T$, conditioned on the user history.

From an end-to-end perspective, the ideal learning objective can be written as:
\begin{equation}
    \min_{\theta,\phi,Q}\mathbb{E}_{H,X,Y}\big[\mathcal{L}( Y,(G_{\phi} \circ Q \circ E_{\theta})( (X_t)_{t=1}^{T-1} )) \big],
\end{equation}
where $\mathcal{L}(\cdot)$ denotes the cross-entropy loss.

\subsection{The Three-Stage SID Pipeline}
A fundamental challenge in SID-based generative recommendation lies in the \emph{self-referential nature of supervision}. The SIDs produced by the quantizer $Q$ are used as training targets $Y$ for the generative model $G_{\phi}$. Consequently, the quality and semantic consistency of the supervision signal are entirely determined by the upstream encoder–quantizer pair. If the generated SIDs are noisy, misaligned with collaborative signals, or semantically inconsistent, the generative model is forced to learn from a distorted target distribution, with no mechanism for downstream correction.

As a result, existing methods adopt a three-stage pipeline:
(i) representation learning (\textbf{E-stage}),
(ii) discretization into SIDs (\textbf{Q-stage}),
and (iii) autoregressive modeling on SID sequences (\textbf{G-stage}). Information is inevitably compressed across these stages; therefore, the effectiveness of SID-based generation critically depends on whether the intermediate representations and SID codes preserve information that is predictive of downstream generative recommendation.

\subsection{Design Principles for Effective SID Tokenization}
Building on the above discussions, we argue that an effective SID pipeline should follow two principled design criteria.

\textbf{First}, the E-stage should be \emph{collaboration-dominant}: representations must be primarily shaped by user interaction signals, with semantic information serving only as auxiliary contexts. Since discretization is inherently information-losing, ensuring that task-relevant collaborative information dominates the representation is critical to prevent it from being blurred or discarded in the Q-stage. Accordingly, representation quality should be evaluated using \textbf{task-aware metrics} that monitor the preservation of both \emph{recommendation-relevant information} and \emph{semantic fidelity}. 

\textbf{Second}, the Q-stage should preserve as much task-relevant information as possible while explicitly encouraging a sequentially predictable structure in the resulting code sequences. In particular, SID quantization should not only minimize \emph{reconstruction distortion}, but also reduce intrinsic \emph{uncertainty in autoregressive decoding} by promoting sequentially predictable and semantically stable codes. 

Together, these principles align the representation learning and quantization with the information requirements of downstream generative recommendation.

\section{Methodology}
We present ReSID, a recommendation-native SID framework that redesigns both representation learning and quantization from an \emph{information-theoretic} perspective. ReSID consists of two components: \emph{(1) Field-Aware Masked Auto-Encoding (FAMAE)} for learning collaboration-dominant item representations, and \emph{(2) Globally Aligned Orthogonal Quantization (GAOQ)} for constructing compact and autoregressive-decoding-friendly SIDs. The overall pipeline is illustrated in Figure~\ref{fig:method_resid}.

\subsection{Field-Aware Masked Auto-Encoding}

\textbf{Motivation.}
The goal of the E-stage is to learn item representations that are \emph{recommendation-sufficient}, with semantic similarity playing a secondary role. Most existing SID approaches~\cite{onerec,onesearch} ground structured, feature-engineered information into texts or multimodal inputs and then extract embeddings using foundation models. This semantic grounding dilutes task-relevant collaborative signals and makes them particularly vulnerable to information loss during subsequent quantization.

From a recommendation-native perspective, structured information should instead be grounded directly in its native symbols: the engineered feature fields. This design follows a \emph{standard modeling assumption} widely adopted in recommender systems, namely that the prediction target is conditionally independent of raw item metadata given sufficient structured features and user contexts: $Y \perp\!\!\!\perp X \mid (F_T , H)$. This assumption underlies most feature-based recommendation formulations, where long-established feature engineering and interaction modeling are treated as sufficient statistics for predicting user behaviors. FAMAE explicitly builds on this premise to align representation learning with the information requirements of the downstream generative recommendation.

\textbf{FAMAE Objective.}
FAMAE trains a Transformer encoder using a field-aware masked prediction objective. Given a target item at position $T$, a random subset of its feature fields is masked, and the model is trained to predict the masked fields conditioned on the remaining fields and the user history. Formally, the training objective is:
\[
\mathcal{L}_{\mathrm{FAMAE}}(\theta) = \mathbb{E}_{\mathcal{M} \sim \pi}\Big[\sum_{k\in\mathcal{M}} \alpha_k \cdot \Big(- \log q_{\theta,k}(f_T^{(k)} \mid \mathbf{h}_{T})\Big)\Big],
\]
where $\mathcal{M}$ is the set of masked fields sampled from masking policy $\pi$, and $\mathbf{h}_T=g_{\theta}((F_{i})_{i=1}^{T-1},\{f_T^{(j)}\}_{j\notin\mathcal{M}})$ is the contextualized latent representation produced by the Transformer encoder $g_{\theta}$ at position $T$, and $\alpha_k$ controls the relative importance of different feature fields. 
The term $q_{\theta,k}(\cdot \mid \mathbf{h})$ denotes the model's predictive distribution of a field given hidden state $\mathbf{h}$, and is given as:
\begin{equation*}
    q_{\theta,k}(f_T^{(k)} \mid \mathbf{h}) =
    \frac{\exp\!\left(\mathcal{K}(\mathbf{h}, \mathbf{e}_{T}^{(k)})\right)}
    {\sum_{v \in \mathcal{V}_k} \exp\!\left(\mathcal{K}(\mathbf{h}, \mathbf{e}_{v}^{(k)})\right)},
\end{equation*}
where $\mathbf{e}_{T}^{(k)}$ denotes the embedding for field $k$ after embedding lookup $\mathbf{e}_{T}^{(k)}=emb_{\theta}(f_T^{(k)})$,
$\mathcal{K}$ denotes scaled cosine similarity, $\mathcal{K}(\cdot,\cdot)=\sqrt{d}\cdot cos(\cdot,\cdot)$,
and $\mathcal{V}_k$ is the vocabulary of field $k$.
Note that with a mild abuse of notation, we use $\theta$ as the whole parameters of the FAMAE encoder which consists of embedding layer $emb_{\theta}$ and Transformer $g_{\theta}$.

Unlike traditional semantic auto-encoding, this objective enforces separable, field-level supervision, encouraging the representation to retain \emph{fine-grained and task-relevant spatial structures} that are robust under the discretization.

\textbf{Information-Theoretic Interpretation.}
Although FAMAE is implemented as a supervised masked prediction loss, it admits a natural information-theoretic interpretation. Minimizing the FAMAE loss increases a variational lower bound on the mutual information between the learned representation and $F_T$. Formally, we have the following proposition:
\begin{proposition}[Predictive Sufficiency Proxy] \label{prop:sufficiency}
    Consider a masking policy $\pi$ and field weights $\alpha_k \ge 0$. Let $w_k=\alpha_k \Pr_{\mathcal{M} \sim \pi}(k\in \mathcal{M})$, then
    \begin{equation*}
        \sum_{k=1}^J w_k I(\mathbf{h}_T;f_T^{(k)}) \ge \sum_{k=1}^J w_k H(f_T^{(k)}) - \mathcal{L}_{\mathrm{FAMAE}}(\theta).
    \end{equation*}
    In particular, minimizing $\mathcal{L}_{\mathrm{FAMAE}}$ increases the right-hand side, which is a variational lower bound on the mask-weighted mutual information between bottleneck $\mathbf{h}_T$ and the target item's features.
\end{proposition}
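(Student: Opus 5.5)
The argument is the standard Barber--Agakov variational bound applied field by field, then averaged over the masking policy. Throughout, $\mathbf{h}_T$ depends on the sampled mask $\mathcal{M}$, so I will read $I(\mathbf{h}_T; f_T^{(k)})$ and $H(f_T^{(k)})$ as the quantities under the data distribution jointly with $\mathcal{M}\sim\pi$. To make this precise, I will condition on the event $k\in\mathcal{M}$ when handling field $k$; this is exactly the setting in which field $k$ contributes to the loss.

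\textbf{Step 1 (per-field variational bound).} Fix $k$ and work under the joint law of $(\mathbf{h}_T, f_T^{(k)})$ conditioned on $k\in\mathcal{M}$. For any conditional distribution $q(\cdot\mid\mathbf{h})$ on $\mathcal{V}_k$ we have
\begin{equation*}
I(\mathbf{h}_T;f_T^{(k)}) = H(f_T^{(k)}) - H(f_T^{(k)}\mid \mathbf{h}_T) \ge H(f_T^{(k)}) + \mathbb{E}\big[\log q(f_T^{(k)}\mid\mathbf{h}_T)\big].
\end{equation*}
The inequality holds because
\begin{equation*}
-H(f\mid\mathbf{h}) - \mathbb{E}[\log q(f\mid\mathbf{h})] = \mathbb{E}_{\mathbf{h}}\,\mathrm{KL}\big(p(\cdot\mid\mathbf{h})\,\|\,q(\cdot\mid\mathbf{h})\big) \ge 0.
\end{equation*}
I then instantiate $q=q_{\theta,k}$. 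This is a valid distribution on the finite vocabulary $\mathcal{V}_k$, since it is a softmax of the scaled cosine scores, so the bound applies for every $\theta$.

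\textbf{Step 2 (weighting and summing).} I multiply the bound for field $k$ by $w_k=\alpha_k\Pr(k\in\mathcal{M})\ge 0$; nonnegativity of $w_k$ preserves the inequality. Summing over $k$ gives
\begin{equation*}
\sum_k w_k I \ge \sum_k w_k H - \sum_k \alpha_k \Pr(k\in\mathcal{M})\,\mathbb{E}\big[-\log q_{\theta,k}\mid k\in\mathcal{M}\big].
\end{equation*}
Exchanging the sum and the expectation, the last term equals $\mathbb{E}_{\mathcal{M}\sim\pi}\big[\sum_{k\in\mathcal{M}}\alpha_k(-\log q_{\theta,k})\big]=\mathcal{L}_{\mathrm{FAMAE}}(\theta)$. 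The right-hand side is therefore exactly the expression in \cref{prop:sufficiency}. The closing remark then follows directly: $\sum_k w_k H(f_T^{(k)})$ depends only on the data, not on $\theta$, so decreasing $\mathcal{L}_{\mathrm{FAMAE}}$ raises the lower bound.

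\textbf{Main obstacle.} The delicate point is bookkeeping rather than inequality. The representation $\mathbf{h}_T$ changes with $\mathcal{M}$, and when $k\in\mathcal{M}$ the field $f_T^{(k)}$ is excluded from the encoder input. Consequently, the mutual information on the left must be interpreted under the conditional law given $k\in\mathcal{M}$; the entropy $H(f_T^{(k)})$ is unaffected by this conditioning, assuming the mask is drawn independently of the data. Under the alternative reading, with $\mathcal{M}$ treated as part of the randomness, I would still apply Step 1 conditionally on each realized $\mathcal{M}\ni k$. I would then average, using that mutual information conditioned on $\mathcal{M}$ is what the bound controls, and state this convention explicitly in the proof.
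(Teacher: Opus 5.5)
Your proof is correct and follows the paper's argument: the per-field cross-entropy/KL decomposition gives $I(\mathbf{h}_T;f_T^{(k)}) \ge H(f_T^{(k)}) - \mathcal{L}_k(\theta)$, and you then weight by $w_k \ge 0$ and sum. Your explicit conditioning on the event $k\in\mathcal{M}$ is exactly what justifies the paper's unstated identification $\sum_k w_k \mathcal{L}_k(\theta) = \mathcal{L}_{\mathrm{FAMAE}}(\theta)$, so it is a useful clarification rather than a different route.
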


This proposition has two main implications for FAMAE.

\textbf{(1) Predictive Sufficiency.}
In FAMAE optimization, the learned representation $\mathbf{h}_T$ compresses information from $F_T$ and $H$, serving as the sufficient statistic if it is used by the downstream models to predict any target $Y$, given $Y\perp\!\!\!\perp X \mid (F_T , H)$. In other words, FAMAE provides a principled proxy for learning representations that preserve the task-relevant information necessary for downstream recommendation, \emph{independent of the design of the Q-stage}.

\textbf{(2) Predictive Superiority.}
Prior sequential recommenders (e.g., SASRec~\cite{sasrec}) are trained with a single-label objective that predicts the fused representation of the next item's features, which can be viewed as a deterministic coarsening $\mathbf{u}_T=f(F_T)$ of the underlying structured feature vectors. The \emph{Data Processing Inequality} implies $I(\mathbf{h}_T;\mathbf{u}_T)\le I(\mathbf{h}_T;F_T)=\sum_{k=1}^J I(\mathbf{h}_T;f_T^{(k)} \mid F_T^{(<k)})$. Equality holds only if $\mathbf{u}_T$ is a sufficient statistic of $F_T$ for predicting $\mathbf{h}_T$, a condition that rarely holds when heterogeneous item fields are fused with non-invertible operators (e.g., pooling or MLP fusion), which discard field identity and fine-grained structures.

Moreover, \emph{mutual predictability} among multiple structured fields naturally retains task-relevant semantic structures (e.g., category hierarchy and attribute constraints), which complements collaborative signals with \textbf{necessary semantic structures} rather than competing with them. This makes FAMAE particularly suitable as a recommendation-native pretraining objective for the SID construction.

\textbf{Item Representation Extraction.}
Notably, the contextual representation $\mathbf{h}_T$ is not suitable for the SID quantization, as it entangles item information with the user history. Since $\mathbf{h}_T$ is a deterministic function of the field-level item embeddings $\mathbf{e}_T$ and the user history $H$, the \emph{Data Processing Inequality} implies $I(\mathbf{e}_T;F_T) \ge I(\mathbf{h}_T;F_T)$, under the data-processing chain $(F_T\rightarrow \mathbf{e}_T \rightarrow \mathbf{h}_T)$. This indicates that $\mathbf{e}_T$ preserves at least as much information about the target item’s structured features $F_T$ as $\mathbf{h}_T$, while remaining independent of user-specific contexts and retaining sufficient task-relevant information for predicting the downstream target $Y$. $\mathbf{e}_T$ provides a more appropriate, task-sufficient basis for the SID construction.

\textbf{Implementations.}
The encoder takes a sequence of items as input, where the first $T-1$ items correspond to the user’s interaction history and the last item serves as the prediction target. For each position, embeddings of all structured feature fields (including item-ID and side information), together with a learnable positional encoding $\mathbf{p}$, are aggregated via sum pooling to form the input token representation. The resulting sequence is then fed into a Transformer backbone with the bidirectional self-attention. We adopt a two-level random masking strategy for the target item. First, we uniformly sample an integer $K\sim U\{1,\dots,J\}$. Second, we randomly select a subset of $K$ fields $\mathcal{M}_K$ to be masked.

To preserve field identity during prediction, we introduce \emph{field-specific} learnable mask tokens $\{\mathbf{m}_1,\dots,\mathbf{m}_J\}$, where each $\mathbf{m}_j$ corresponds to a field $f^{(j)}$. For a masked field, its embedding is replaced by the corresponding mask token. 

As a result, the transformer input of the target item is constructed by:
\begin{equation*}
    \tilde{\mathbf{e}}_T=\mathbf{p}_T+\sum_{j=1}^{J}\mathbf{e}_T^{(j)},
    \quad \mathbf{e}_T^{(j)}=
    \begin{cases}
        \mathbf{m}_j, & j\in \mathcal{M}_K,\\
        emb_{\theta}(f_T^{(j)}), & j \notin \mathcal{M}_K.
    \end{cases}
\end{equation*}

We construct the final representation for each item by concatenating the embeddings of all its corresponding feature fields learned by FAMAE $\operatorname{concat}(\{\mathbf{e}^{(j)}\}_{j=1}^{J})$, which serves as the input to the subsequent SID quantization stage.

\subsection{Task-Aware Metrics for Embedding Quality} \label{sec:metrics}
Existing SID pipelines lack principled, task-aware metrics for evaluating the quality of item representations learned in the E-stage, independent of the downstream Q-stage and G-stage. To address this gap, we introduce two complementary proxy metrics that measure how well the FAMAE embeddings preserve different types of information required.

\textbf{Metric 1: Collaborative Modeling Capability.}
We measure target item prediction accuracy under \emph{full-field masking}, where all structured feature fields of the target item are masked and prediction relies solely on the user history and learned embedding space. From an information-theoretic perspective, this metric evaluates whether the learned representations are sufficient to mediate the conditional dependence between the user history $H$ and target item $F_T$, i.e., whether the predictive information required to infer $F_T$ from $H$ is preserved and accessible through the embeddings.

\textbf{Metric 2: Discriminative Semantics and Spatial Structure.}
We measure item-ID prediction accuracy under \emph{single-field masking}, where only the item-ID field is masked. This metric evaluates whether the structured feature embeddings contain sufficient discriminative semantic information—such as category hierarchy and attribute entailment—to distinguish individual items. It serves as a proxy for assessing whether the learned representation space preserves fine-grained semantic structures required for constructing informative and stable Semantic IDs.

Together, these metrics capture complementary aspects of the embedding quality: collaborative predictability and discriminative semantic structures. Empirically (Figure~\ref{fig:exp_metric_corr}), representations that perform well on both metrics consistently yield higher-quality Semantic IDs and improved downstream generative recommendation performance.

\subsection{Globally Aligned Orthogonal Quantization} \label{sec:method_gaoq}

\textbf{Ideal Objective for SID Quantization.}
Although the downstream generator predicts $c_l$ conditioned on $(C_{(<l)}, H)$, user history $H$ is not accessible in Q-stage. Therefore, the quantizer should be designed to produce codes that are intrinsically autoregressive-decoding-friendly, independent of $H$. From an information-theoretic perspective, an effective SID quantizer should satisfy three desiderata:
(i) low global reconstruction distortion under the discrete bottleneck,
(ii) high semantic contribution from each individual code, and
(iii) low prefix-conditional uncertainty to facilitate sequential prediction. These requirements can be formalized as:
\begin{align} \label{equation:obj}
    \min_Q  H(\mathbf{z} \mid C)
    &+ \mu \sum_l H(\mathbf{z} \mid c_l) + \lambda \sum_l H(c_l \mid C_{(<l)})
    , \notag \\
    &\text{s.t. } H(c_l)\approx \log |c_l|.
\end{align}
where $H(\mathbf{z} \mid C)$ measures the overall reconstruction uncertainty from the full SID sequence, $H(\mathbf{z} \mid c_l)$ forces each code to be individually informative and prefix-invariant, and $H(c_l \mid C_{(<l)})$ captures the intrinsic branching uncertainty faced by an autoregressive decoder, serving as an upper bound on the decoding entropy of the downstream recommender: $H(c_l \mid C_{(<l)},H)\le H(c_l \mid C_{(<l)})$. The entropy constraint ensures uniformly distributed marginal codes for each layer $l$, preventing index collapse.

We next examine how existing quantization schemes violate these principles, before introducing GAOQ.

\textbf{Misalignment of RQ-style Quantization.}
RQ-VAE and RQ-Kmeans only optimize overall reconstruction loss under a discrete bottleneck, often with entropy regularization to encourage balanced code usage. However, their objectives are agnostic to the autoregressive nature of SID decoding. In particular, standard residual quantization assigns codes independently across levels and does not explicitly constrain prefix-conditional uncertainty $H(c_l \mid C_{(<l)})$, nor does it encourage individual codes to carry meaningful semantic information as measured by $H(\mathbf{z} \mid c_l)$. As a result, such methods can achieve low overall reconstruction distortion and balanced marginal usage, yet still produce code sequences that are unstable and difficult to predict sequentially, degrading downstream autoregressive SID modeling.

\textbf{Limitations of Hierarchical K-Means with Local Indexing.} Hierarchical K-Means introduces a tree-structured code by path-dependent branching, which can partially reduce prefix-conditional uncertainty $H(c_l \mid C_{(<l)})$. However, existing hierarchical schemes typically assign child indices \emph{independently} and \emph{locally} within each parent node. As a result, the same code index may correspond to different semantic directions under different prefixes (Figure~\ref{fig:appx_compare}). 

This issue can be formalized via the decomposition below:
\begin{equation} \label{equation:recon-entropy}
    H(\mathbf{z} \mid c_l)=H(\mathbf{z} \mid c_l,C_{(<l)}) + I(\mathbf{z};C_{(<l)} \mid c_l).
\end{equation}
Local indexing increases the prefix-dependent ambiguity term $I(\mathbf{z};C_{(<l)} \mid c_l)$, since the same index at level $l$ can correspond to different semantic directions under different prefixes. This means that the semantic interpretation of a code \emph{depends heavily on its prefix}. Though hierarchical refinement typically reduces the term $H(\mathbf{z} \mid c_l,C_{(<l)})$, the increase in $I(\mathbf{z};C_{(<l)} \mid c_l)$ can offset this gain, leading to larger $H(\mathbf{z} \mid c_l)$. Consequently, individual codes become less informative and less prefix-invariant, which harms both code-level semantic stability and downstream decoding.

\textbf{GAOQ Design.} GAOQ is designed to jointly optimize all three objectives in Eqn.~\eqref{equation:obj} by combining hierarchical vector quantization with globally aligned indexing.

First, GAOQ reduces prefix-conditional uncertainty $H(c_l \mid C_{(<l)})$ through hierarchical vector quantization, where each level refines the partition of the representation space. Conditioning on deeper codes progressively shrinks the conditional variance of $\mathbf{z}$, which also contributes to lowering the global reconstruction uncertainty $H(\mathbf{z} \mid C)$.

Second, GAOQ explicitly targets prefix-dependent ambiguity $I(\mathbf{z};C_{(<l)} \mid c_l)$ by enforcing global alignment of code indices. At each level, child cluster centroids are first centered by subtracting their parent centroid, aligning cross-prefix representations to a common origin. We then introduce a set of approximately orthogonal reference directions shared across all parent nodes. Code indices are assigned by matching these centered vectors to the reference directions using Hungarian Matching based on cosine similarity. This procedure ensures that the same index within each level corresponds to a consistent semantic direction across different prefixes. This reduces both $I(\mathbf{z};C_{(<l)} \mid c_l)$ and $H(\mathbf{z} \mid c_l,C_{(<l)})$ and leads to lower $\sum_l H(\mathbf{z} \mid c_l)$.

Together, GAOQ minimizes all three terms in Eqn.~\eqref{equation:obj} while enforcing the entropy constraint on $H(c_l)$ via balanced K-Means. As a result, it produces compact, semantically stable, and autoregression-friendly SIDs that preserve task-relevant information for downstream generative recommendation. See Algorithm~\ref{alg1} in the appendix for more details.

\section{Experimental Settings}
We conduct extensive experiments to evaluate the effectiveness, interpretability, and efficiency of ReSID. 

Specifically, we seek to answer the following \textbf{Research Questions}:
\textbf{(i)} Does ReSID outperform strong sequential and SID-based baselines when side information is accessible?
\textbf{(ii)} How do FAMAE and GAOQ individually contribute to the overall performance?
\textbf{(iii)} Do the proposed task-aware embedding metrics reliably predict downstream SID performance?
\textbf{(iv)} Does ReSID improve the efficiency of SID tokenization relative to existing SID pipelines?

\textbf{Datasets.}
We evaluate ReSID on \textbf{ten} subsets of the Amazon-2023 review dataset~\cite{amazon2023}, including \textit{Musical Instruments}, \textit{Video Games}, \textit{Industrial \& Scientific}, \textit{Baby Products}, \textit{Arts, Crafts \& Sewing}, \textit{Sports \& Outdoors}, \textit{Toys \& Games}, \textit{Health \& Household}, \textit{Beauty \& Personal Care}, and \textit{Books}. We follow standard practice~\cite{letter,eager,etegrec} to preprocess the datasets. See detailed preprocessing steps and dataset statistics in Appendix~\ref{sec:appendix_dataset}.

\textbf{Fair comparison between SID- and item-ID-based recommendation.} A critical but often overlooked confounder in prior SID evaluations is that \emph{SID pipelines typically exploit rich item metadata, while sequential baselines are trained with item-IDs only}—making reported gains difficult to attribute to the modeling paradigm itself. Based on this, we choose our baselines as follows.

\textbf{Compared Methods.}
We compare ReSID with three categories of baselines:
\emph{(i) Sequential recommenders (item-ID-only):} HGN~\cite{hgn}, SASRec~\cite{sasrec}, BERT4Rec~\cite{bert4rec}, and \SthreeRec~\cite{s3rec}.
\emph{(ii) Sequential recommenders with structured features:} the corresponding $^*$ variants (e.g., HGN$^*$) of the above models augmented with side-info fields.
\emph{(iii) SID-based generative recommenders:} TIGER~\cite{tiger}, LETTER~\cite{letter}, EAGER~\cite{eager}, UNGER~\cite{unger}, and ETEGRec~\cite{etegrec}, which tokenize items into SIDs and perform autoregressive generation.
Detailed descriptions are provided in Appendix~\ref{sec:appendix_baseline}. Implementation details and hyperparameter settings are provided in Appendix~\ref{sec:appendix_implement}.

\textbf{Evaluation Settings.}
Following prior work~\cite{tiger,letter,etegrec}, we evaluate recommendation performance using Recall@K and NDCG@K with $K \in \{5, 10\}$. We adopt the leave-one-out protocol in which each user's last interaction is used for testing and the second-to-last for validation.

\section{Experimental Results}

\subsection{Overall Results under Fair Comparison} \label{sec:overall_res}
Table~\ref{tab:exp_main_results_summary} reports overall average relative improvements of ReSID over baselines on ten Amazon-2023 subsets (full results in Appendix~\ref{sec:appendix_main_results}). We have the following observations.

\textbf{5.1.1 Overall performance.}
ReSID consistently achieves the best performance across all metrics and datasets. To the best of our knowledge, ReSID is the first SID-based method that consistently outperforms strong item-ID–based sequential recommenders even when they are augmented with structured side information.

\textbf{5.1.2 Impact of structured features on sequential baselines.}
Augmenting sequential recommenders with structured feature fields yields substantial gains: models such as SASRec$^{*}$ and BERT4Rec$^{*}$ significantly outperform their ID-only variants and often match or exceed prior SID-based pipelines. This indicates that many previously reported SID improvements arise from \emph{additional side information} rather than the tokenization paradigm itself, highlighting the importance of matching side-info augmentations for fairness.

\textbf{5.1.3 ReSID vs. prior SID-based methods.}
Compared with the best SID baseline, LETTER, ReSID achieves average improvements of \textbf{16.0\%}/\textbf{13.8\%} on Recall@5/10 and \textbf{16.2\%}/\textbf{14.9\%} on NDCG@5/10. In contrast, prior SID methods are on par with and often inferior to SASRec$^{*}$. These show that effective SID tokenization needs objective alignment throughout E-, Q-, and G-stages, and validate ReSID’s successful shift beyond semantic-centric paradigms.

\begin{table}[t]
    \caption{
    Main results (ReSID's average \textbf{relative improvement} over baselines, \emph{averaged over ten datasets for each metric}). The smallest improvement is in bold, indicating the strongest baseline. See Appendix Table~\ref{tab:appx_main_results} for full results.
    }
    \label{tab:exp_main_results_summary}

    \centering
    \small
    \setlength{\tabcolsep}{8pt}
    \renewcommand{\arraystretch}{1.15}

    \begin{tabular}{@{}lcccc@{}}
    \toprule
    Model          & R@5     & R@10    & N@5     & N@10    \\
    \midrule
    HGN            & 79.91\% & 62.16\% & 97.21\% & 80.51\% \\
    SASRec         & 47.84\% & 37.52\% & 56.14\% & 47.56\% \\
    BERT4Rec       & 52.63\% & 41.73\% & 60.08\% & 51.43\% \\
    \SthreeRec     & 39.56\% & 29.13\% & 47.18\% & 38.58\% \\
    \midrule
    HGN$^*$        & 81.97\% & 67.86\% & 90.41\% & 79.29\% \\
    SASRec$^*$     & \textbf{13.05\%} & \textbf{3.75\%}  & 29.24\% & 18.38\% \\
    BERT4Rec$^*$   & 20.63\% & 12.32\% & 27.45\% & 20.41\% \\
    \SthreeRec$^*$ & 20.76\% & 12.79\% & 28.93\% & 21.57\% \\
    \midrule
    TIGER          & 22.21\% & 19.47\% & 23.66\% & 21.50\% \\
    LETTER         & 16.03\% & 13.81\% & \textbf{16.17\%} & \textbf{14.86\%} \\
    EAGER          & 37.30\% & 31.17\% & 39.02\% & 35.24\% \\
    UNGER          & 30.44\% & 26.51\% & 31.82\% & 29.18\% \\
    ETEGRec        & 47.07\% & 40.30\% & 50.94\% & 45.86\% \\
    \bottomrule
    \end{tabular}
\end{table}

\textbf{5.1.4 Effect of collaborative signal injection.}
Methods that incorporate collaborative signals during SID tokenization (LETTER, ReSID) consistently outperform purely semantic tokenizers such as TIGER. ReSID further improves over LETTER by preserving collaborative information \emph{throughout} both representation learning and quantization, rather than balancing it with semantic objectives as in LETTER, EAGER, and UNGER. These results suggest that maintaining collaborative structures is more critical for reducing autoregressive decoding uncertainty, while ReSID retains only the minimal necessary semantics through structured, feature-based representation learning with FAMAE.

\textbf{5.1.5 ``End-to-end'' SID learning is suboptimal.}
ETEGRec, which jointly optimizes SID tokenization and downstream recommendation loss, underperforms ReSID significantly despite its tighter coupling. This supports our analysis (Sec.~\ref{sec:rethinking_sid_pipeline}) that end-to-end SID supervision is inherently unstable: since SIDs serve as both intermediate representations and training targets, directly backpropagating task loss through the quantization stage can distort the code space. ReSID avoids this pitfall by decoupling representation learning, quantization, and recommender stages, yielding more stable and predictive tokenization.

\begin{table}[t]
    \caption{
    Ablation study (ReSID's average relative improvement over its variants, \emph{averaged over three datasets for each metric}). See Appendix Table~\ref{tab:appx_ablation} for full results.
    }
    \label{tab:exp_ablation_summary}

    \centering
    \small
    \setlength{\tabcolsep}{8pt}
    \renewcommand{\arraystretch}{1.15}

    \begin{tabular}{@{}lcccc@{}}
    \toprule
    Variants & R@5     & R@10   & N@5     & N@10   \\ \midrule
    E1       & 5.40\%  & 4.60\% & 4.12\%  & 3.86\% \\
    E2       & 11.05\% & 8.45\% & 10.73\% & 9.22\% \\
    E3       & 12.38\% & 8.76\% & 11.68\% & 9.61\% \\ \midrule
    Q1       & 5.64\%  & 4.18\% & 4.88\%  & 4.16\% \\
    Q2       & 5.15\%  & 1.41\% & 4.72\%  & 2.49\% \\ \bottomrule
    \end{tabular}
\end{table}

\subsection{Ablation Study}
To disentangle the contributions of the E-stage (FAMAE) and the Q-stage (GAOQ), we conduct ablations on three Amazon-2023 datasets and report results in Table~\ref{tab:exp_ablation_summary}.

\textbf{Variants.}
We consider the following controlled replacements.
\emph{E-stage ablations} (GAOQ fixed):
(1) \textbf{E1} (LLM w/ GAOQ): replacing FAMAE with LLM embeddings;
(2) \textbf{E2} (SASRec w/ GAOQ): replacing FAMAE with SASRec representations (collaborative-centric);
(3) \textbf{E3} (BERT4Rec w/ GAOQ): replacing FAMAE with BERT4Rec representations;
\emph{Q-stage ablations} (FAMAE fixed):
(4) \textbf{Q1} (FAMAE w/ RQ-VAE): replacing GAOQ with RQ-VAE;
(5) \textbf{Q2} (FAMAE w/ Hierarchical K-Means).
We compare these variants with \textbf{ReSID}. The full results are shown in Table~\ref{tab:appx_ablation}.

\textbf{Observations.}
ReSID consistently outperforms all ablated variants. Compared with E1–E3, ReSID yields substantial gains, validating the \emph{predictive sufficiency} analysis of FAMAE: neither purely semantic embeddings nor purely collaborative representations are sufficient for downstream SID-based recommendation tasks. ReSID’s advantage over E2 and E3 further supports the importance of preserving structured feature identity, which is lost in standard sequence encoders, supporting our \emph{predictive superiority} analysis.

On the quantization side, ReSID outperforms Q1, confirming that GAOQ better reduces autoregressive decoding uncertainty than RQ-VAE, and that minimizing reconstruction error alone is insufficient for downstream tasks. ReSID outperforms Q2, demonstrating the necessity of global index alignment to reduce the reconstruction error and index ambiguity beyond the locally indexed hierarchical K-Means.

Overall, the strongest results are obtained when recommendation-native representations and globally aligned quantization are combined to preserve task-relevant information while improving sequential predictability.

\subsection{Item Embedding Quality Metrics}
As introduced in Sec.~\ref{sec:metrics}, we evaluate whether the proposed task-aware proxy metrics reflect downstream SID performance. We conduct experiments on two datasets, \textit{Musical Instruments} and \textit{Baby Products}, tracking Metric~1 and Metric~2 at multiple checkpoints during FAMAE training and examining their relationship with downstream results under a fixed Q/G-stage setup.

Specifically, for each dataset, we select several intermediate checkpoints, freeze the learned item embeddings, construct SIDs using the same GAOQ quantizer, and train an identical SID generator. As shown in Fig.~\ref{fig:exp_metric_corr}, downstream R@10 consistently increases as both metrics improve across both datasets, indicating a strong positive association between embedding quality and SID performance. These results demonstrate that the proposed metrics serve as lightweight, task-aware diagnostics for E-stage FAMAE representations without requiring repeated end-to-end retraining.

\begin{figure}[t]
    \centering
    \includegraphics[width=0.48\linewidth]{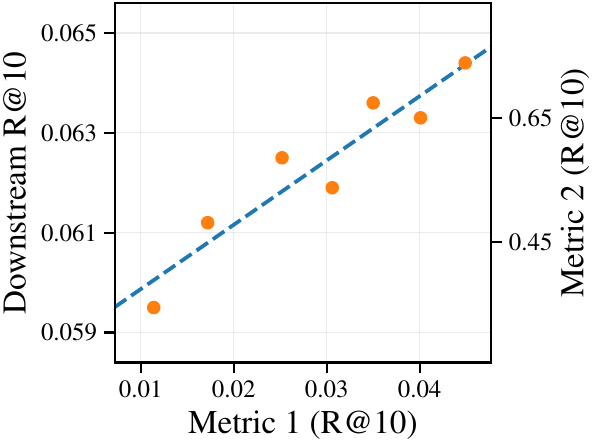}
    \hfill
    \includegraphics[width=0.48\linewidth]{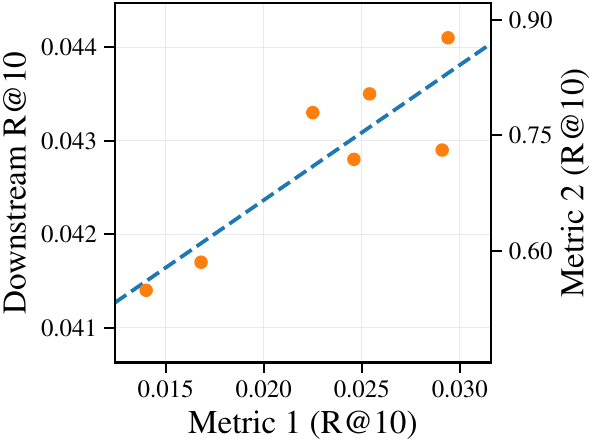}
    \caption{
    Downstream R@10 at selected FAMAE training steps plotted against Metric~1 (R@10 of the target item when all fields are masked). The right y-axis shows the corresponding Metric~2 (R@10 of the target item-ID when only the item-ID field is masked). Left: \textit{Musical Instruments}. Right: \textit{Baby Products}.
    }
    \label{fig:exp_metric_corr}
\end{figure}

\begin{table}[t]
    \caption{
    Running time comparison of the quantization stage for different methods on various datasets (in minutes).
    } 
    \label{tab:exp_wall_clock}

    \centering
    \small

    \begin{tabular}{lccc}
    \toprule
    \textbf{Dataset} & \textbf{LETTER} & \textbf{TIGER} & \textbf{ReSID} \\
    \midrule
    Arts Crafts \& Sewing    & 3356.64 & 224.29 & 27.38 \\
    Health \& Household      & 6537.30 & 371.71 & 71.89 \\
    Beauty \& Personal Care  & 7379.83 & 423.89 & 95.96 \\
    \bottomrule
    \end{tabular}
\end{table}

\subsection{Efficiency of SID Tokenization}
We evaluate SID tokenization efficiency by measuring the wall-clock runtime of the \emph{quantization stage} under the same hardware setting, reported in Table~\ref{tab:exp_wall_clock}. We compare ReSID with TIGER and LETTER as representative \emph{training-based} SID methods: TIGER relies on purely semantic tokenization, while LETTER injects collaborative signals through RQ-VAE training and represents the strongest prior baseline.

ReSID is the most efficient tokenizer among all compared methods. LETTER is the slowest, requiring \textbf{77$\times$--122$\times$} more time than ReSID due to its expensive optimization-based tokenizer training. TIGER is also slower, incurring approximately a \textbf{5$\times$} overhead relative to ReSID. These results demonstrate that ReSID improves not only recommendation effectiveness but also the computational efficiency and practicality of SID tokenization at scale.

We omit the runtime of the representation learning. FAMAE can be trained efficiently, with a cost comparable to light sequential models like SASRec. In contrast, prior SID pipelines rely on heterogeneous encoders (e.g., pretrained sequential models and large foundation models) whose costs are either amortized or unspecified, making a fair and meaningful time comparison difficult.

\subsection{Additional Experimental Results}
Due to space constraints, we report additional experimental analyses in the appendix. Specifically:
(1) Section~\ref{sec:sensitivity_branching} studies ReSID's sensitivity to key hyperparameters;
(2) Section~\ref{sec:scaling_law_result} shows that SIDs produced by ReSID exhibit more favorable scaling behavior than semantic-centric tokenizers;
(3) Section~\ref{sec:embedding_visual} visualizes item embeddings learned by FAMAE, demonstrating that FAMAE uniquely preserves both task-relevant semantic structures and collaborative patterns;
(4) Section~\ref{sec:task_align_analysis_famae} validates how FAMAE’s design aligns with downstream sequential decoding objectives;
(5) Section~\ref{sec:GAOQ_lower_ambiguity} provides direct evidence that GAOQ reduces reconstruction ambiguity through global index alignment.

\section{Limitations and Future Work}
While FAMAE provides task-aware metrics for embedding quality, principled diagnostics for GAOQ remain an open challenge. Moreover, although ReSID improves SID construction, SID-based generative models converge substantially more slowly (tens of times) than item-ID-based methods such as SASRec. We leave them for future work.

\section{Conclusion}
We introduce \textbf{ReSID}, a principled recommendation-native SID framework that aligns representation learning and quantization objectives with the information requirements of generative recommendation. Through FAMAE and GAOQ, ReSID produces compact and predictable SIDs efficiently without foundation models. Both theoretical analysis and extensive empirical results demonstrate the superiority of ReSID, which, to the best of our knowledge, is the first SID-based approach to outperform strong item-ID baselines augmented with side information.

% Acknowledgements should only appear in the accepted version.
\section*{Acknowledgements}
This paper is partially supported by National Natural Science Foundation of China (NO. U23A20313, 62372471) and The Science Foundation for Distinguished Young Scholars of Hunan Province (NO. 2023JJ10080). We are grateful for resources from the High Performance Computing Center of Central South University.

% \textbf{Do not} include acknowledgements in the initial version of the paper
% submitted for blind review.

% If a paper is accepted, the final camera-ready version can (and usually should)
% include acknowledgements.  Such acknowledgements should be placed at the end of
% the section, in an unnumbered section that does not count towards the paper
% page limit. Typically, this will include thanks to reviewers who gave useful
% comments, to colleagues who contributed to the ideas, and to funding agencies
% and corporate sponsors that provided financial support.

\section*{Impact Statement}
This paper advances machine learning for recommender systems, specifically in representation learning and discrete tokenization for recommender systems. 
The techniques proposed are methodological in nature and are intended to improve the performance and scalability of recommendation models. 
We do not foresee any significant negative societal or ethical consequences arising directly from this work.

% Authors are \textbf{required} to include a statement of the potential broader
% impact of their work, including its ethical aspects and future societal
% consequences. This statement should be in an unnumbered section at the end of
% the paper (co-located with Acknowledgements -- the two may appear in either
% order, but both must be before References), and does not count toward the paper
% page limit. In many cases, where the ethical impacts and expected societal
% implications are those that are well established when advancing the field of
% Machine Learning, substantial discussion is not required, and a simple
% statement such as the following will suffice:

% ``This paper presents work whose goal is to advance the field of Machine
% Learning. There are many potential societal consequences of our work, none
% which we feel must be specifically highlighted here.''

% The above statement can be used verbatim in such cases, but we encourage
% authors to think about whether there is content which does warrant further
% discussion, as this statement will be apparent if the paper is later flagged
% for ethics review.

\bibliographystyle{icml2026}
\bibliography{References}

\newpage
\appendix
\onecolumn

% main results
% 10 datasets, 4 ID-based baseline, 4 (ID + feature)-based baseline and 5 SID-based baseline

\begin{landscape}
\begin{center}    
    \captionof{table}{
    Full main results. Datasets: MI = Musical Instruments, VG = Video Games, IS = Industrial \& Scientific, BP = Baby Products, ACS = Arts, Crafts \& Sewing, SO = Sports \& Outdoors, TG = Toys \& Games, HH = Health \& Household, BPC = Beauty \& Personal Care, BK = Books. For each dataset, we report Recall (R@5, R@10) and NDCG (N@5, N@10). The best result in each column is shown in \textbf{bold}, and the second-best is \underline{underlined}.
    }
    \label{tab:appx_main_results}

    \footnotesize
    \setlength{\tabcolsep}{2.8pt}
    \renewcommand{\arraystretch}{1.05}

    % ===================== (A) 5 datasets =====================
    \textbf{(A) MI / VG / IS / BP / ACS}
    \vspace{2pt}

    \begin{tabular}{@{}l*{20}{c}@{}}
    \toprule
    Model
    & \multicolumn{4}{c}{MI}
    & \multicolumn{4}{c}{VG}
    & \multicolumn{4}{c}{IS}
    & \multicolumn{4}{c}{BP}
    & \multicolumn{4}{c}{ACS} \\
    \cmidrule(lr){2-5}\cmidrule(lr){6-9}\cmidrule(lr){10-13}\cmidrule(lr){14-17}\cmidrule(lr){18-21}
    & R@5 & R@10 & N@5 & N@10
    & R@5 & R@10 & N@5 & N@10
    & R@5 & R@10 & N@5 & N@10
    & R@5 & R@10 & N@5 & N@10
    & R@5 & R@10 & N@5 & N@10 \\
    \midrule
    HGN            & 0.0309 & 0.0498 & 0.0192 & 0.0253 & 0.0408 & 0.0685 & 0.0248 & 0.0337 & 0.0214 & 0.0361 & 0.0132 & 0.0179 & 0.0164 & 0.0276 & 0.0101 & 0.0137 & 0.0190 & 0.0314 & 0.0119 & 0.0159 \\
    SASRec         & 0.0320 & 0.0528 & 0.0199 & 0.0265 & 0.0509 & 0.0820 & 0.0322 & 0.0422 & 0.0216 & 0.0349 & 0.0133 & 0.0176 & 0.0206 & 0.0338 & 0.0131 & 0.0174 & 0.0218 & 0.0353 & 0.0138 & 0.0182 \\
    BERT4Rec       & 0.0324 & 0.0517 & 0.0206 & 0.0268 & 0.0479 & 0.0783 & 0.0303 & 0.0401 & 0.0207 & 0.0341 & 0.0131 & 0.0174 & 0.0204 & 0.0338 & 0.0129 & 0.0172 & 0.0211 & 0.0343 & 0.0134 & 0.0176 \\
    \SthreeRec     & 0.0336 & 0.0543 & 0.0212 & 0.0278 & 0.0495 & 0.0801 & 0.0315 & 0.0413 & 0.0210 & 0.0349 & 0.0135 & 0.0179 & 0.0213 & 0.0360 & 0.0135 & 0.0182 & 0.0228 & 0.0370 & 0.0145 & 0.0191 \\
    \midrule
    HGN$^*$        & 0.0287 & 0.0469 & 0.0183 & 0.0242 & 0.0331 & 0.0541 & 0.0212 & 0.0280 & 0.0192 & 0.0321 & 0.0122 & 0.0163 & 0.0169 & 0.0278 & 0.0108 & 0.0143 & 0.0166 & 0.0271 & 0.0107 & 0.0141 \\
    SASRec$^*$     & 0.0397 & \underline{0.0639} & 0.0248 & 0.0326 & 0.0532 & 0.0914 & 0.0295 & 0.0417 & \underline{0.0333} & \textbf{0.0538} & 0.0185 & 0.0250 & 0.0257 & \underline{0.0425} & 0.0157 & 0.0211 & 0.0275 & 0.0464 & 0.0157 & 0.0218 \\
    BERT4Rec$^*$   & 0.0377 & 0.0614 & 0.0241 & 0.0317 & 0.0537 & 0.0879 & 0.0337 & 0.0447 & 0.0296 & 0.0488 & 0.0186 & 0.0248 & 0.0241 & 0.0390 & 0.0155 & 0.0202 & 0.0275 & 0.0439 & 0.0176 & 0.0229 \\
    \SthreeRec$^*$ & 0.0380 & 0.0611 & 0.0242 & 0.0317 & 0.0516 & 0.0843 & 0.0326 & 0.0431 & 0.0274 & 0.0441 & 0.0169 & 0.0223 & 0.0237 & 0.0386 & 0.0152 & 0.0200 & 0.0261 & 0.0420 & 0.0168 & 0.0219 \\
    \midrule
    TIGER          & 0.0385 & 0.0592 & 0.0251 & 0.0318 & 0.0569 & 0.0884 & 0.0374 & 0.0475 & 0.0297 & 0.0460 & 0.0192 & 0.0244 & 0.0256 & 0.0406 & 0.0169 & 0.0217 & 0.0285 & 0.0446 & 0.0187 & 0.0239 \\
    LETTER         & \underline{0.0406} & 0.0623 & \underline{0.0269} & \underline{0.0338} & \underline{0.0592} & \underline{0.0914} & \underline{0.0389} & \underline{0.0493} & 0.0307 & 0.0482 & \underline{0.0197} & \underline{0.0254} & \underline{0.0261} & 0.0416 & \underline{0.0171} & \underline{0.0221} & \underline{0.0297} & \underline{0.0464} & \underline{0.0196} & \underline{0.0249} \\
    EAGER          & 0.0327 & 0.0523 & 0.0210 & 0.0273 & 0.0547 & 0.0863 & 0.0357 & 0.0458 & 0.0252 & 0.0406 & 0.0163 & 0.0212 & 0.0199 & 0.0331 & 0.0126 & 0.0168 & 0.0240 & 0.0375 & 0.0155 & 0.0198 \\
    UNGER          & 0.0362 & 0.0567 & 0.0233 & 0.0299 & 0.0547 & 0.0859 & 0.0354 & 0.0454 & 0.0232 & 0.0374 & 0.0150 & 0.0196 & 0.0221 & 0.0355 & 0.0144 & 0.0187 & 0.0253 & 0.0395 & 0.0164 & 0.0210 \\
    ETEGRec        & 0.0372 & 0.0579 & 0.0244 & 0.0310 & 0.0558 & 0.0869 & 0.0367 & 0.0466 & 0.0251 & 0.0393 & 0.0166 & 0.0212 & 0.0229 & 0.0369 & 0.0151 & 0.0196 & 0.0249 & 0.0389 & 0.0162 & 0.0206 \\
    \midrule
    ReSID          & \textbf{0.0417} & \textbf{0.0645} & \textbf{0.0273} & \textbf{0.0346} & \textbf{0.0597} & \textbf{0.0927} & \textbf{0.0396} & \textbf{0.0501} & \textbf{0.0340} & \underline{0.0512} & \textbf{0.0218} & \textbf{0.0273} & \textbf{0.0285} & \textbf{0.0441} & \textbf{0.0186} & \textbf{0.0236} & \textbf{0.0313} & \textbf{0.0485} & \textbf{0.0205} & \textbf{0.0261} \\
    \bottomrule
    \end{tabular}

    \vspace{10pt}

    % ===================== (B) 5 datasets =====================
    \textbf{(B) SO / TG / HH / BPC / BK}
    \vspace{2pt}

    \begin{tabular}{@{}l*{20}{c}@{}}
    \toprule
    Model
    & \multicolumn{4}{c}{SO}
    & \multicolumn{4}{c}{TG}
    & \multicolumn{4}{c}{HH}
    & \multicolumn{4}{c}{BPC}
    & \multicolumn{4}{c}{BK} \\
    \cmidrule(lr){2-5}\cmidrule(lr){6-9}\cmidrule(lr){10-13}\cmidrule(lr){14-17}\cmidrule(lr){18-21}
    & R@5 & R@10 & N@5 & N@10
    & R@5 & R@10 & N@5 & N@10
    & R@5 & R@10 & N@5 & N@10
    & R@5 & R@10 & N@5 & N@10
    & R@5 & R@10 & N@5 & N@10 \\
    \midrule
    HGN            & 0.0112 & 0.0184 & 0.0070 & 0.0093 & 0.0129 & 0.0214 & 0.0077 & 0.0105 & 0.0112 & 0.0189 & 0.0068 & 0.0093 & 0.0123 & 0.0207 & 0.0076 & 0.0103 & 0.0203 & 0.0367 & 0.0113 & 0.0166 \\
    SASRec         & 0.0140 & 0.0231 & 0.0089 & 0.0118 & 0.0169 & 0.0270 & 0.0105 & 0.0138 & 0.0137 & 0.0227 & 0.0087 & 0.0115 & 0.0157 & 0.0254 & 0.0101 & 0.0132 & 0.0281 & 0.0459 & 0.0170 & 0.0227 \\
    BERT4Rec       & 0.0131 & 0.0216 & 0.0082 & 0.0110 & 0.0159 & 0.0255 & 0.0101 & 0.0131 & 0.0135 & 0.0224 & 0.0086 & 0.0114 & 0.0150 & 0.0246 & 0.0095 & 0.0126 & 0.0280 & 0.0456 & 0.0173 & 0.0230 \\
    \SthreeRec     & 0.0153 & 0.0249 & 0.0097 & 0.0128 & 0.0186 & 0.0297 & 0.0117 & 0.0153 & 0.0150 & 0.0245 & 0.0094 & 0.0125 & 0.0171 & 0.0282 & 0.0110 & 0.0145 & 0.0318 & 0.0530 & 0.0182 & 0.0251 \\
    \midrule
    HGN$^*$        & 0.0121 & 0.0197 & 0.0078 & 0.0102 & 0.0119 & 0.0193 & 0.0075 & 0.0098 & 0.0124 & 0.0202 & 0.0078 & 0.0103 & 0.0137 & 0.0219 & 0.0088 & 0.0115 & 0.0231 & 0.0397 & 0.0138 & 0.0191 \\
    SASRec$^*$     & \underline{0.0197} & \underline{0.0319} & 0.0122 & \underline{0.0161} & \underline{0.0243} & \textbf{0.0401} & 0.0127 & 0.0178 & 0.0176 & \underline{0.0287} & 0.0108 & 0.0144 & \underline{0.0208} & \underline{0.0340} & 0.0126 & 0.0169 & 0.0369 & 0.0629 & 0.0202 & 0.0285 \\
    BERT4Rec$^*$   & 0.0175 & 0.0286 & 0.0111 & 0.0147 & 0.0223 & 0.0356 & 0.0138 & 0.0181 & 0.0168 & 0.0274 & 0.0106 & 0.0140 & 0.0192 & 0.0312 & 0.0123 & 0.0161 & 0.0337 & 0.0564 & 0.0199 & 0.0272 \\
    \SthreeRec$^*$ & 0.0184 & 0.0297 & 0.0117 & 0.0154 & 0.0209 & 0.0338 & 0.0128 & 0.0170 & 0.0170 & 0.0274 & 0.0107 & 0.0140 & 0.0198 & 0.0322 & 0.0125 & 0.0165 & \underline{0.0381} & \underline{0.0639} & 0.0207 & \underline{0.0290} \\
    \midrule
    TIGER          & 0.0180 & 0.0279 & 0.0119 & 0.0151 & 0.0214 & 0.0322 & 0.0141 & 0.0176 & 0.0180 & 0.0276 & 0.0121 & 0.0152 & 0.0192 & 0.0299 & 0.0125 & 0.0160 & 0.0265 & 0.0408 & 0.0171 & 0.0219 \\
    LETTER         & 0.0187 & 0.0288 & \underline{0.0124} & 0.0156 & 0.0221 & 0.0340 & \underline{0.0147} & \underline{0.0185} & \underline{0.0180} & 0.0277 & \underline{0.0121} & \underline{0.0153} & 0.0203 & 0.0315 & \underline{0.0133} & \underline{0.0169} & 0.0308 & 0.0457 & 0.0209 & 0.0256 \\
    EAGER          & 0.0145 & 0.0236 & 0.0095 & 0.0125 & 0.0191 & 0.0289 & 0.0127 & 0.0159 & 0.0138 & 0.0223 & 0.0089 & 0.0116 & 0.0164 & 0.0265 & 0.0107 & 0.0139 & 0.0342 & 0.0497 & \underline{0.0234} & 0.0284 \\
    UNGER          & 0.0164 & 0.0260 & 0.0109 & 0.0140 & 0.0200 & 0.0302 & 0.0132 & 0.0165 & 0.0168 & 0.0258 & 0.0112 & 0.0141 & 0.0182 & 0.0286 & 0.0119 & 0.0152 & 0.0309 & 0.0468 & 0.0203 & 0.0254 \\
    ETEGRec        & 0.0148 & 0.0236 & 0.0097 & 0.0125 & 0.0178 & 0.0270 & 0.0116 & 0.0146 & 0.0143 & 0.0229 & 0.0092 & 0.0120 & 0.0177 & 0.0274 & 0.0115 & 0.0146 & 0.0176 & 0.0284 & 0.0110 & 0.0145 \\
    \midrule
    ReSID          & \textbf{0.0210} & \textbf{0.0323} & \textbf{0.0137} & \textbf{0.0174} & \textbf{0.0261} & \underline{0.0392} & \textbf{0.0174} & \textbf{0.0216} & \textbf{0.0208} & \textbf{0.0317} & \textbf{0.0140} & \textbf{0.0175} & \textbf{0.0230} & \textbf{0.0357} & \textbf{0.0150} & \textbf{0.0191} & \textbf{0.0530} & \textbf{0.0737} & \textbf{0.0369} & \textbf{0.0436} \\
    \bottomrule
    \end{tabular}

\end{center}
\end{landscape}

% ablation
% representation
%   E1: text + GARQ; E2: sasrec (ID) + GARQ; E3: bert4rec (ID) + GARQ
% quantization
%   Q1: FAMAE + rqvae; Q2: FAMAE + GARQ (w/o global alignment)

\begin{table}[t]
    \caption{
    Full Ablation study. Datasets: MI = Musical Instruments, IS = Industrial \& Scientific, BP = Baby Products. Each dataset reports Recall (R@5, R@10) and NDCG (N@5, N@10). The best result in each column is shown in \textbf{bold}.
    }
    \label{tab:appx_ablation}

    \centering
    \small
    \setlength{\tabcolsep}{6.5pt}
    \renewcommand{\arraystretch}{1.15}

    \begin{tabular}{lcccccccccccc}
    \toprule
    Model & \multicolumn{4}{c}{MI} & \multicolumn{4}{c}{IS} & \multicolumn{4}{c}{BP} \\
    \cmidrule(lr){2-5}\cmidrule(lr){6-9}\cmidrule(lr){10-13}
    & R@5 & R@10 & N@5 & N@10 & R@5 & R@10 & N@5 & N@10 & R@5 & R@10 & N@5 & N@10 \\
    \midrule
    E1    & 0.0413 & 0.0629 & 0.0272 & 0.0341 & 0.0303 & 0.0472 & 0.0198 & 0.0252 & 0.0276 & 0.0428 & 0.0183 & 0.0232 \\
    E2    & 0.0400 & 0.0621 & 0.0261 & 0.0332 & 0.0320 & 0.0496 & 0.0208 & 0.0264 & 0.0232 & 0.0373 & 0.0152 & 0.0197 \\
    E3    & 0.0400 & 0.0624 & 0.0261 & 0.0333 & 0.0309 & 0.0488 & 0.0201 & 0.0258 & 0.0232 & 0.0373 & 0.0153 & 0.0198 \\
    \midrule
    Q1    & 0.0396 & 0.0609 & 0.0260 & 0.0328 & 0.0320 & 0.0497 & 0.0207 & 0.0264 & 0.0271 & 0.0424 & 0.0179 & 0.0228 \\
    Q2    & 0.0390 & 0.0630 & 0.0255 & 0.0332 & 0.0320 & 0.0505 & 0.0206 & 0.0266 & 0.0279 & 0.0438 & 0.0184 & 0.0235 \\
    \midrule
    ReSID & \textbf{0.0417} & \textbf{0.0645} & \textbf{0.0273} & \textbf{0.0346} & \textbf{0.0340} & \textbf{0.0512} & \textbf{0.0218} & \textbf{0.0273} & \textbf{0.0285} & \textbf{0.0441} & \textbf{0.0186} & \textbf{0.0236} \\
    \bottomrule
    \end{tabular}
\end{table}

\section{Proof for Proposition \ref{prop:sufficiency}}

\begin{proof}
    For each field $k$ we denote the true conditional distribution as $p_k(f_T^{(k)} \mid \mathbf{h}_T)$, and the approximated distribution as $q_{\theta,k}(f_T^{(k)} \mid \mathbf{h}_T)$. Given the negative log-likelihood for field $k$ as:
    \begin{equation*}
        \mathcal{L}_k(\theta):=\mathbb{E}\big[-\log q_{\theta,k}(f_T^{(k)} \mid \mathbf{h}_T) \big],
    \end{equation*}
    Using the standard cross-entropy decomposition, we have:
    \begin{equation*}
        \mathcal{L}_k(\theta)=H(f_T^{(k)} \mid \mathbf{h}_T)+\mathbb{E}\big[D_{\mathrm{KL}}(p_k(f_T^{(k)} \mid \mathbf{h}_T)||q_{\theta,k}(f_T^{(k)} \mid \mathbf{h}_T))\big].
    \end{equation*}
    Since the KL divergence is non-negative, we have:
    \begin{equation*}
        -\mathcal{L}_k(\theta) \le -H(f_T^{(k)} \mid \mathbf{h}_T)
    \end{equation*}
    According to mutual information identity, we have:
    \begin{equation*}
        I(\mathbf{h}_T;f_T^{(k)})=H(f_T^{(k)}) - H(f_T^{(k)} \mid \mathbf{h}_T) \ge H(f_T^{(k)}) - \mathcal{L}_k(\theta).
    \end{equation*}
    Multiply both sides with the predefined mask-weighted coefficient $w_k(\ge0)$ and sum over $k$, we have:
    \begin{align}
        \sum_{k=1}^{J}w_k I(\mathbf{h}_T;f_T^{(k)})&\ge \sum_{k=1}^J w_k H(f_T^{(k)}) - \sum_{k=1}^Jw_k\mathcal{L}_k(\theta) \notag \\
        &= \sum_{k=1}^J w_k H(f_T^{(k)}) - \mathcal{L}_{\mathrm{FAMAE}}(\theta). \notag
    \end{align}
    Since $\sum_{k=1}^J w_k H(f_T^{(k)})$ is independent of $\theta$, minimizing $\mathcal{L}_{\mathrm{FAMAE}}(\theta)$ monotonically increases a variational lower bound on the mask-weighted sum $\sum_{k=1}^J w_k I(\mathbf{h}_T; f_T^{(k)})$.
\end{proof}

\section{Related Work}

\subsection{Sequential Recommendation.}
Sequential recommendation aims to model users' evolving preferences from their historical interaction sequences to predict future items.
Early neural approaches~\cite{narm,caser,gru4rec,srgnn,improvedrnns}, such as GRU4Rec~\cite{gru4rec}, adopt GRU-based RNNs to encode user behavior sequences. 
Recently, Transformer~\cite{transformer}-based models have been introduced into sequential recommendation. SASRec~\cite{sasrec} employs a unidirectional self-attention mechanism to model item–item dependencies across the entire sequence. Inspired by masked language modeling, BERT4Rec~\cite{bert4rec} formulates sequential recommendation as a masked item prediction task and leverages bidirectional self-attention to learn contextualized item representations. Building upon this framework, \SthreeRec~\cite{s3rec} further enhances representation learning by incorporating multiple self-supervised pre-training objectives.
Despite architectural differences, these methods generally follow a common framework: they embed each item into a high-dimensional embedding, summarize user behavior into a sequence representation, and predict the next item by scoring candidate items—typically via dot-product (or cosine) similarity between the sequence representation and item embeddings, combined with approximate nearest neighbor (ANN) search~\cite{faiss} for efficient retrieval.

\subsection{SID-based Generative Recommendation.}
Traditional sequential recommendation models assign each item a dedicated embedding and retrieve items via similarity matching~\cite{recformer}. However, this design necessitates massive embedding tables and requires an exhaustive comparison across the entire item space during retrieval, leading to substantial memory overhead and high computational costs~\cite{tiger,liger}. In contrast, SID-based Generative Recommendation represents items as discrete identifiers and formulates recommendation as a sequence generation problem~\cite{rpg}, directly generating the target item, avoiding explicit similarity search over the full item vocabulary.

TIGER~\cite{tiger} is a pioneering work in this direction. It follows a pipeline that first derives item representations from textual information, then encodes them into discrete SIDs using RQ-VAE, and finally employs an encoder–decoder model to generate the SIDs of the target item conditioned on historical interactions. During inference, candidate SIDs are generated via beam search and mapped back to items for Top-K recommendation.

Following this pipeline, a large body of subsequent work focuses on optimizing different components of the pipeline, with particular emphasis on improving the quality and expressiveness of SIDs. 
For example, CoST~\cite{cost} improves the training objective of RQ-VAE to preserve important neighborhood structures among items, while LIGER~\cite{liger} and COBRA~\cite{cobra} enable generative models to jointly model item SIDs and dense representations, providing richer item information and thereby improving prediction accuracy. 
Several methods further inject collaborative signals into SID learning. LETTER~\cite{letter} injects collaborative supervision directly into the RQ-VAE-based SID learning process; EAGER~\cite{eager} separately learns discrete identifiers for semantic content and user behavior; and UNGER~\cite{unger} learns item representations by jointly modeling semantic and collaborative information before deriving SIDs. In addition, ETEGRec~\cite{etegrec} aligns the learning of item identifiers with the recommendation objective in an end-to-end way, enabling SIDs to encode information that is more directly optimized for downstream recommendation.

Although prior work such as LETTER, EAGER, and UNGER explores different ways to incorporate collaborative signals into the original learning process, these methods primarily inject collaborative supervision by adding auxiliary objectives, which are not fully consistent with the original learning objective, leading to conflicting optimization signals. As a result, the learned identifiers are still not explicitly optimized to preserve the task-relevant information required by downstream generative recommendation. ETEGRec further pushes this direction by coupling identifier learning with downstream generative recommendation in an end-to-end manner. However, the continuously evolving identifiers make the generative model’s inputs and targets non-stationary, causing strong coupling and mutual interference between SID learning and sequence generation. ReSID differs by providing an information-theoretic framework that jointly aligns representation learning and SID quantization under a unified objective, resulting in a simpler and more stable learning pipeline and consistently strong empirical performance.

\section{Dataset} \label{sec:appendix_dataset}

% Density = interaction / (item*user)

\begin{table}[t]
    \caption{
    Statistics of the Datasets.
    }
    \label{tab:appx_dataset}

    \centering

    \begin{tabular}{lrrrr}
    \toprule
    Dataset & \#Users & \#Items & \#Interactions & Density \\
    \midrule
    Musical Instruments & 57,359 & 23,742 & 490,522 & 0.036\% \\
    Video Games & 94,515 & 24,685 & 772,218 & 0.033\% \\
    Industrial \& Scientific & 50,886 & 25,142 & 394,989 & 0.031\% \\
    Baby Products & 150,642 & 35,024 & 1,189,171 & 0.023\% \\
    Arts Crafts \& Sewing & 196,980 & 87,449 & 1,706,484 & 0.010\% \\
    Sports \& Outdoors & 409,309 & 151,411 & 3,333,753 & 0.005\% \\
    Toys \& Games & 431,411 & 156,537 & 3,652,250 & 0.005\% \\
    Health \& Household & 796,014 & 183,230 & 6,860,582 & 0.005\% \\
    Beauty \& Personal Care & 712,259 & 193,383 & 5,785,124 & 0.004\% \\
    Books & 775,503 & 485,218 & 8,680,494 & 0.002\% \\
    \bottomrule
    \end{tabular}
\end{table}

Following prior work~\cite{letter,eager,etegrec}, we adopt the standard 5-core setting, where users and items with fewer than five interactions are removed. The remaining interactions are chronologically ordered to construct user behavior sequences, and a leave-one-out strategy is employed for evaluation. For training, we employ a sliding window strategy with a maximum sequence length of 32. In cases where a target item in the evaluation set does not appear in the training data, it is added to the training set to ensure valid evaluation. In addition, we extract four types of structured side information for each item, including the store identifier and the first-, second-, and third-level category identifiers. Items lacking any of these features are filtered out. Table~\ref{tab:appx_dataset} reports detailed statistics of the resulting datasets.

\section{Compared Methods} \label{sec:appendix_baseline}
We compare ReSID with representative sequential recommendation models and recent generative recommendation methods.

\noindent\textbf{Sequential recommendation methods.}
\begin{itemize}
    \item \textbf{HGN}~\cite{hgn} models user preferences by jointly capturing short-term and long-term interests through a hierarchical gating mechanism over interaction sequences.
    \item \textbf{SASRec}~\cite{sasrec} employs a unidirectional Transformer with self-attention to model sequential dependencies and predicts the next item from historical interactions.
    \item \textbf{BERT4Rec}~\cite{bert4rec} introduces bidirectional Transformer encoding with masked item prediction, leveraging both left and right context to learn sequence representations.
    \item \textbf{\SthreeRec}~\cite{s3rec} enhances sequential recommendation via multiple self-supervised learning objectives based on mutual information maximization, improving representation learning for next-item prediction. For \SthreeRec, we follow the standard two-stage pipeline with self-supervised pretraining followed by SASRec-style fine-tuning on item-ID sequences.
\end{itemize}

\noindent\textbf{Generative recommendation methods.}
\begin{itemize}
    \item \textbf{TIGER}~\cite{tiger} utilizes a pretrained text encoder and RQ-VAE quantization to learn semantic identifiers for items, and performs generative recommendation by autoregressively decoding item identifiers.
    \item \textbf{LETTER}~\cite{letter} proposes a learnable tokenizer that extends RQ-VAE-based semantic identifiers by jointly incorporating hierarchical semantics, collaborative signals, and code assignment diversity for generative recommendation.
    \item \textbf{EAGER}~\cite{eager} integrates semantic and collaborative information via a two-stream generative architecture with shared encoding and separate decoding for enhanced collaborative modeling.
    \item \textbf{UNGER}~\cite{unger} integrates semantic and collaborative information into a unified item code by learning item representations via joint optimization of cross-modality alignment and next-item prediction in a sequential recommendation model.
    \item \textbf{ETEGRec}~\cite{etegrec} integrates item tokenization and generative recommendation into a unified end-to-end framework, jointly optimizing the tokenization and recommendation processes for improved performance.
\end{itemize}

\section{Implementation Details} \label{sec:appendix_implement}
Our experiments follow the standard three-stage pipeline: representation learning (E-stage), SID construction via quantization (Q-stage), and SID-based generative modeling (G-stage). We summarize the hyperparameter settings below.

\textbf{E-stage and sequential baselines.}
Unless otherwise specified, all sequential recommenders and the E-stage encoder in ReSID share the same model size and training configuration for controlled comparison.
We use an embedding/hidden size of 128, 2 Transformer layers, 4 attention heads, and an FFN dimension of 512 with ReLU activation, together with dropout rate of 0.1.
Feature embeddings have dimension 128 and are fused by sum-pooling when feature fields are used.
We optimize with AdamW (learning rate 0.001, weight decay $1.0\times10^{-5}$) using batch size 2048 and train for up to 500 epochs with early stopping patience 3, evaluating every epoch.
When a sampled classification objective is used (e.g., for large vocabularies), we sample 128 negatives and use scaled cosine similarity.
We use a cosine learning-rate scheduler.

\textbf{Q-stage.}
For ReSID, the dataset-specific branching factors $\{b_l\}$ are reported in Table~\ref{tab:appx_resid_cluster}.
For SID-based baselines (TIGER, LETTER, EAGER, UNGER, and ETEGRec), we follow the optimal quantization settings and other hyperparameters reported in their original papers and official implementations.
For methods requiring text-based item embeddings (TIGER, LETTER, EAGER, and UNGER), we follow prior work~\cite{tiger} and use the pretrained Sentence-T5-xxl~\cite{sentencet5} to obtain semantic item embeddings before discretization.

\begin{table}[t]
    \caption{
    ReSID's branching factors on each dataset. Datasets: MI = Musical Instruments, VG = Video Games, IS = Industrial \& Scientific, BP = Baby Products, ACS = Arts, Crafts \& Sewing, SO = Sports \& Outdoors, TG = Toys \& Games, HH = Health \& Household, BPC = Beauty \& Personal Care, BK = Books.
    }
    \label{tab:appx_resid_cluster}

    \centering
    \small
    \setlength{\tabcolsep}{4pt}

    \begin{tabular}{*{10}{c}}
    \toprule
    MI & VG & IS & BP & ACS & SO & TG & HH & BPC & BK \\
    \midrule
    (32,40,19) & (32,64,13) & (24,80,14) & (32,64,18) & (64,96,16) & (128,128,11) & (192,192,5) & (50,512,8) & (96,192,11) & (256,256,8) \\
    \bottomrule
    \end{tabular}
\end{table}

\textbf{G-stage.}
All SID-based generative recommenders (including ReSID and prior SID baselines) use the same T5-style encoder-decoder architecture for training on SID sequences.
We use 4 encoder layers and 4 decoder layers, hidden size 128, FFN dimension 512, and 4 attention heads per layer (key/value dimension 32), with dropout rate of 0.1.
We use AdamW (learning rate 0.005, weight decay $1.0\times10^{-5}$) with a cosine learning-rate scheduler, batch size 2048, and train for up to 500 epochs.
During inference, we use beam search with beam size 50 at each decoding step.

\section{Full Results of Main Experiments} \label{sec:appendix_main_results}

Table~\ref{tab:appx_main_results} reports the \emph{absolute} performance of all compared methods on each Amazon-2023 subset under the experimental settings described in Section~\ref{sec:overall_res}. These results complement the macro-averaged relative improvements reported in the main paper and provide detailed subset-wise comparisons for reproducibility and further analysis.

\section{Sensitivity to Branching Factors} \label{sec:sensitivity_branching}

\begin{table}[t]
    \caption{
    Sensitivity to branching factor at the first two SID levels on \textit{Musical Instruments}. We vary $(b_1,b_2)$ and report downstream Recall@10 (R@10). Each row fixes $b_1$ and sweeps $b_2$.
    }
    \label{tab:appx_sensitivity_branching}

    \centering
    \setlength{\tabcolsep}{4pt}
    \renewcommand{\arraystretch}{1.1}

    \begin{tabular}{@{}c*{8}{c}@{}}
    \toprule
    $b_1$ \textbackslash\ $b_2$ & 24 & 32 & 40 & 48 & 56 & 64 & 72 & 80 \\
    \midrule
    16 & 6.3068 & 6.3906 & 6.1548 & 6.2928 & 6.1897 & 6.3277 & 6.2928 & 6.2386 \\
    24 & 6.1985 & 6.3155 & 6.1950 & 6.3539 & 6.3609 & 6.2404 & 6.3260 & 6.1915 \\
    32 & 6.3469 & 6.3120 & 6.4465 & 6.3068 & 6.2509 & 6.2718 & 6.2945 & 6.2579 \\
    48 & 6.2806 & 6.2072 & 6.2229 & 6.1880 & 5.9941 & 6.0570 & 6.1897 & 6.2055 \\
    \bottomrule
    \end{tabular}
\end{table}

We investigate how \emph{branching factors} affect ReSID. In our hierarchical quantization, at each level $l$, \emph{each parent cluster} is partitioned into $b_l$ balanced child clusters (Alg.~\ref{alg1}), where $b_l$ is the branching factor (i.e., the number of children per parent) at that level. Varying $\{b_l\}$ controls the granularity and capacity of the discrete code space, trading off distortion under the discrete bottleneck (captured by $H(\mathbf{z} \mid C)$) and sequential uncertainty (captured by $\sum_l H(c_l \mid C_{(<l)})$), as discussed in Section~\ref{sec:method_gaoq}.

\textbf{Protocol.}
We conduct this study on \textit{Musical Instruments}. Given the item scale in our experiments (20K--400K items), we use a three-level SID throughout and vary the branching factors at the first two levels $(b_1,b_2)$ while keeping other settings fixed. The last level primarily serves to disambiguate items within each prefix $(c_1,c_2)$; its effective branching factor depends on the population of each prefix and can be auto-computed once the prefix levels are fixed, so we do not treat $b_3$ as a primary tuning knob.
Following the GAOQ design, global alignment is applied to non-root levels where indices would otherwise be locally assigned under different parent prefixes; the root-level codes are already globally defined by clustering in the original embedding space and thus do not require global alignment. We keep the default anchor setting ($g_l=b_l$ at aligned levels) and do not vary it separately.

\textbf{Results.}
Table~\ref{tab:appx_sensitivity_branching} shows a clear capacity-predictability trade-off consistent with Section~\ref{sec:method_gaoq}. When $(b_1,b_2)$ are too small, the code space is overly coarse and performance degrades due to insufficient capacity and higher quantization distortion. Increasing $(b_1,b_2)$ improves performance and reaches its best range at moderate branching factors (e.g., $(b_1,b_2)=(32,40)$ achieves the highest Recall@10 in our sweep), after which further increasing the branching factors yields diminishing returns and can slightly hurt performance as autoregressive decoding uncertainty increases. Overall, we observe the optimal performance when $b_1 \times b_2$ is about $10$ to $20$ times smaller than the vocabulary size, which is also observed for other datasets.

\section{Empirical Scaling Trend} \label{sec:scaling_law_result}

\begin{figure}[t]
    \centering
    \includegraphics[width=0.35\linewidth]{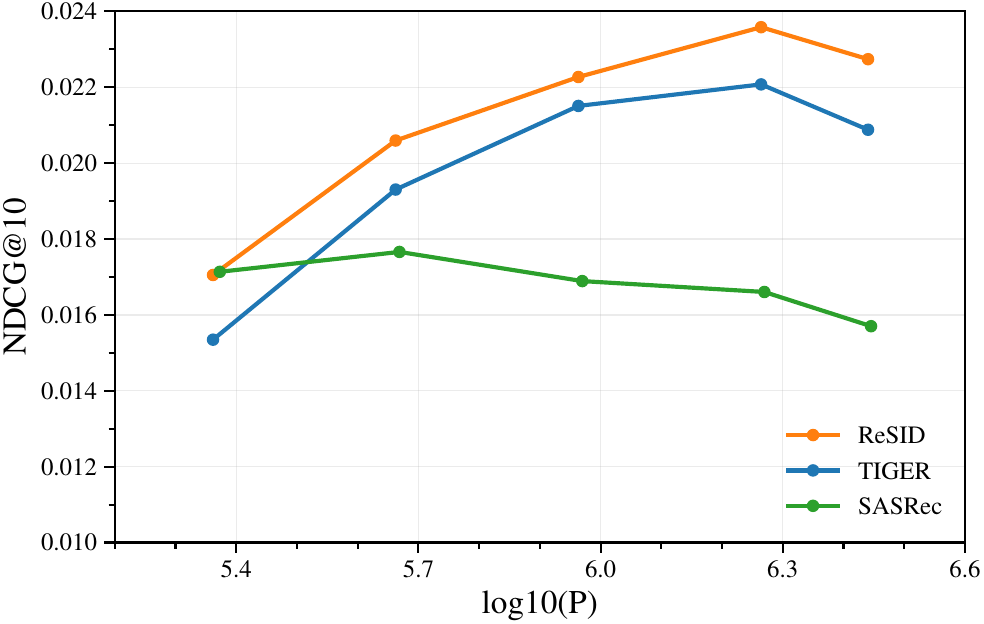}
    \caption{
    Empirical scaling behavior on \textit{Baby Products}.
    The x-axis shows $\log_{10}(P)$, where $P$ is the number of non-embedding model parameters, and the y-axis reports NDCG@10 on the test set.
    For each model size, we select the checkpoint with the best validation NDCG@10 and report its corresponding test NDCG@10.
    We compare ReSID with TIGER and SASRec under matched backbone parameter budgets.
    }
    \label{fig:appx_scaling_ndcg10}
\end{figure}

We investigate how downstream recommendation quality varies with model scale by changing the backbone parameter budget and evaluating ranking metrics. For each configuration, we periodically evaluate on the validation set, select the checkpoint that achieves the best validation NDCG@10, and report its corresponding test NDCG@10 in Fig.~\ref{fig:appx_scaling_ndcg10}. To ensure a fair comparison across different architectures and tokenization schemes, we match models by the number of non-embedding (backbone) parameters.

As shown in Fig.~\ref{fig:appx_scaling_ndcg10}, ReSID consistently achieves the best NDCG@10 across the explored parameter range and exhibits the most favorable scaling behavior, indicating that it leverages additional backbone capacity more effectively under the same data and training protocol. We also observe that performance does not improve monotonically at the largest scale: the final point slightly drops, which is likely due to overfitting and reduced generalization in this low-data regime. Finally, comparing the SID-based models (ReSID and TIGER) with the item-ID-based SASRec baseline under matched backbone budgets suggests that semantic IDs provide a more scalable and parameter-efficient modeling interface for generative recommendation in our setting.

\section{Representation Analysis of FAMAE} \label{sec:embedding_visual}

\subsection{Semantic--Collaborative Alignment in Item Representations}

\begin{figure}[t]
    \centering
    \includegraphics[width=1\linewidth]{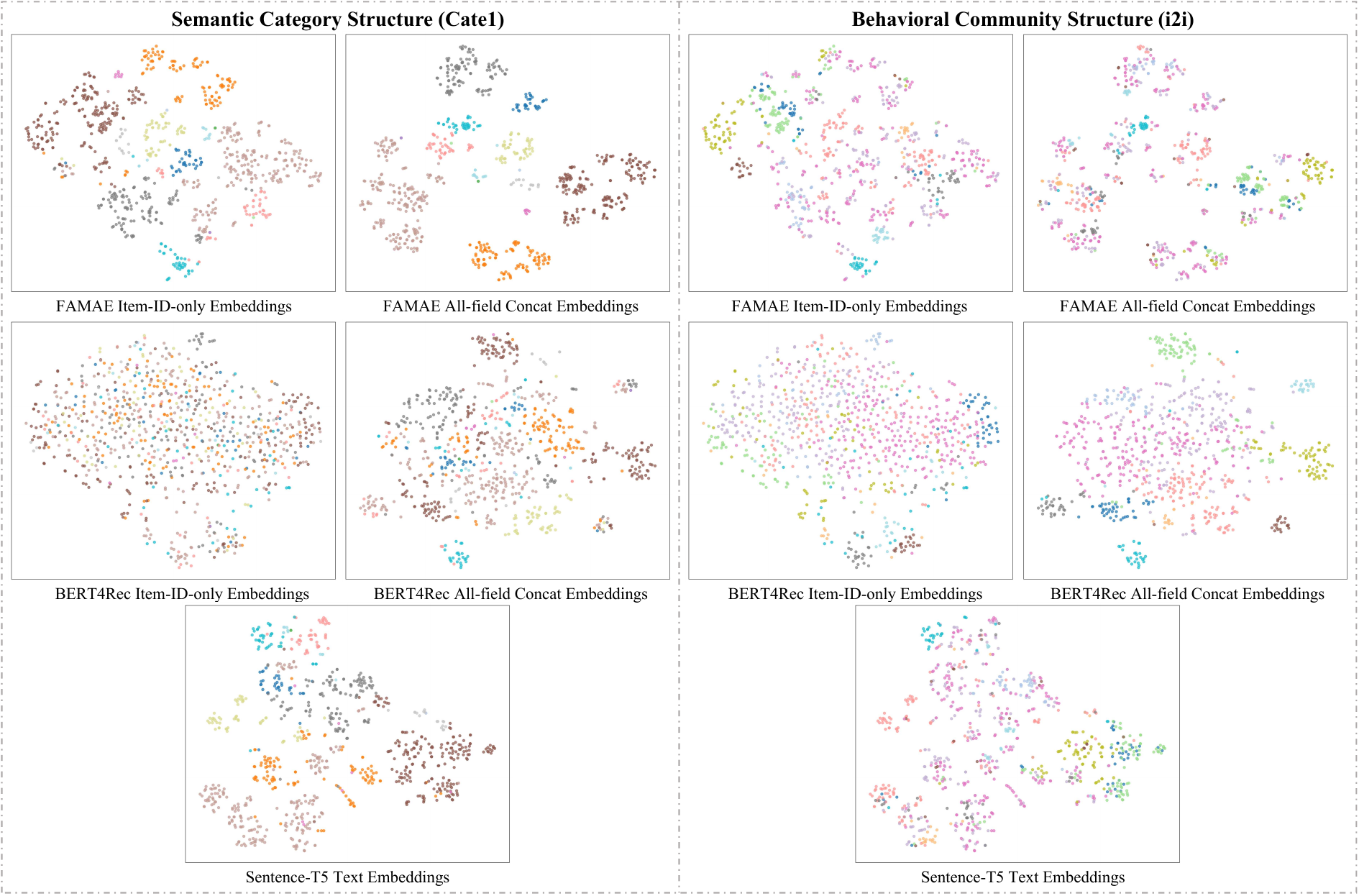}
    \vspace{-2mm}
    \caption{
    T-SNE visualization of item embeddings learned by different methods.
    \textbf{Left:} \emph{Semantic Category Structure (Cate1)}—items are colored by their category labels.
    \textbf{Right:} \emph{Behavioral Community Structure}—items are colored by communities discovered via the Louvain algorithm on a weighted item--item co-occurrence graph constructed from user interaction histories.
    }
    \label{fig:appx_tsne_catei2i_all}
\end{figure}

To examine whether FAMAE captures both semantic and collaborative structures, we visualize item embeddings learned by FAMAE, BERT4Rec, and Sentence-T5 using t-SNE~\cite{tsne}. Items are colored by (i) level-1 category labels to reflect semantic signals, and (ii) behavioral communities to reflect collaborative signals. Behavioral communities are identified using the Louvain algorithm~\cite{louvain} on an item–item co-interaction graph, where edges connect items co-interacted by the same users.

As illustrated in Fig.~\ref{fig:appx_tsne_catei2i_all}, FAMAE exhibits a distinctive advantage over the compared methods. It is the only approach that produces well-structured clusters under both semantic and collaborative views, indicating that its representations simultaneously encode category-level semantics and user interaction-driven relational structure. 

In comparison, embeddings learned by the pre-trained Sentence-T5-xxl model exhibit clear semantic clustering, but are inferior in reflecting behavioral communities (points with the same color are less clustered, e.g., brown and yellow communities scatter and overlap with others), as they are learned purely from textual information without collaborative supervision. Conversely, BERT4Rec embeddings effectively capture collaborative structures but show little semantic organization, resulting in poor category separability.

Moreover, FAMAE’s item-ID embeddings align closely with category-induced semantic structure, whereas BERT4Rec’s item-ID embeddings are largely unstructured. This suggests that FAMAE preserves task-relevant information from structured features, yielding representations that are predictive and sufficient for generative recommendation.

\subsection{Structured and Field-Aligned Embedding Space}

\begin{figure}[t]
    \centering
    \includegraphics[width=0.6\linewidth]{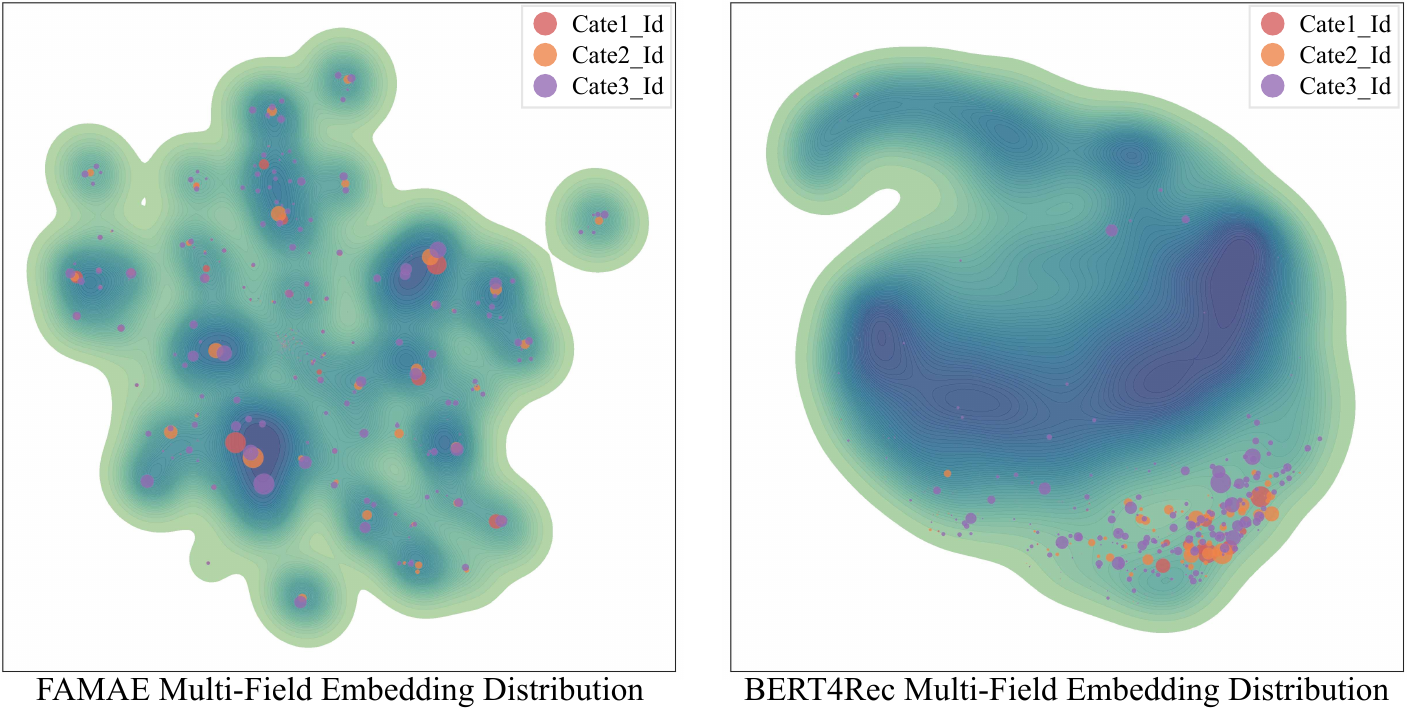}
    \vspace{-2mm}
    \caption{
    UMAP visualization of the joint embedding distribution of item-ID and category fields.
    Item-ID embeddings are rendered as a density map, while category embeddings are displayed as scatter points with sizes proportional to the number of associated items.
    }
    \label{fig:appx_emb_distribution}
\end{figure}

Figure~\ref{fig:appx_emb_distribution} visualizes the joint embedding distribution of item-ID and multi-level category features using UMAP~\cite{umap}.

As shown in the figure, item embeddings learned by FAMAE exhibit a clearly structured and multi-peak distribution, forming multiple localized high-density regions rather than a single unimodal cluster.
This observation suggests that FAMAE learns expressive and discriminative item representations, leading to a more structured embedding geometry.

More importantly, FAMAE yields strong alignment between item-ID embeddings and category embeddings. Category embeddings are distributed around dense item regions, and item density peaks are consistently accompanied by corresponding concentrations of category embeddings.
This correspondence indicates that item identifiers and categorical fields are embedded in a shared and coherent representation space, where category information actively shapes item-level geometry rather than acting as auxiliary side information.

In contrast, embeddings learned by BERT4Rec exhibit a largely unimodal and smooth item distribution with a collapsed internal structure, while category embeddings are heavily concentrated in peripheral regions of the space.

By jointly supervising multiple fields at the final interaction position, FAMAE results in a structured, field-aligned embedding space that provides a favorable foundation for downstream SID construction.

\section{Downstream Task Aligned Design of the FAMAE Objective} \label{sec:task_align_analysis_famae}

\begin{figure}[t]
    \centering
    \includegraphics[width=1\linewidth]{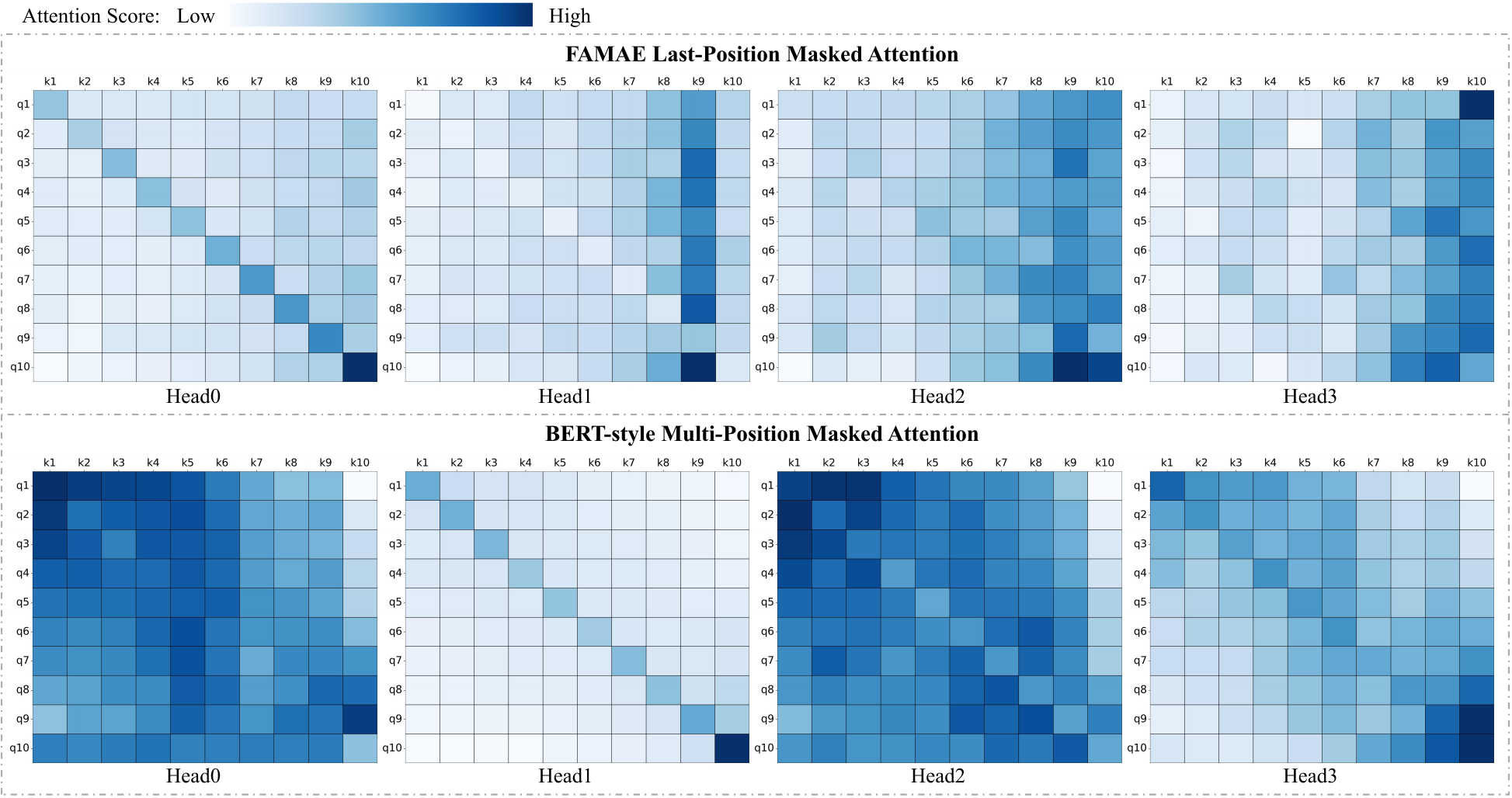}
    \vspace{-2mm}
    \caption{
    Attention score visualizations under different masking-position strategies. The top row uses last-position (target item) masking as used in FAMAE, while the bottom row uses BERT-style masking that randomly masks multiple positions.
    }
    \label{fig:appx_attention}
\end{figure}

\subsection{From the Attention Perspective}
Figure~\ref{fig:appx_attention} compares attention patterns (averaged over validation samples) induced by different masking strategies and illustrates how FAMAE’s last-position, multi-field masking yields task-aligned temporal attention.

Under FAMAE, attention from the target position (lower-right corner) exhibits a clear recency bias, with monotonically increasing weights toward more recent interactions. This reflects the encoder’s focus on aggregating collaborative signals that are most informative for predicting the next item under a sequential recommendation objective. 

In contrast, BERT-style random masking optimizes a position-agnostic reconstruction objective, resulting in diffuse and less structured attention patterns that are not explicitly aligned with next-item prediction.

Overall, these observations show that FAMAE shapes contextual aggregation in a recommendation-native manner, aligning attention structure with the information requirements of downstream generative recommendation rather than generic semantic reconstruction.

\subsection{From the Sequential Decoding Uncertainty Perspective}
\begin{table}[t]
    \caption{
    Overlap ratio between target item codes and historical item codes.
    A historical item is counted as matched if it shares the same code with the target item at the same SID layer.
    Higher overlap indicates stronger task-consistent alignment in the discrete SID space.
    }

    \centering
    \label{tab:appx_sid_overlap}
    
    \begin{tabular}{lcc}
    \toprule
    Method & Code1 & Code2 \\
    \midrule
    FAMAE + GAOQ 
    & \textbf{0.0724} 
    & \textbf{0.0260} \\
    
    FAMAE + Hierarchical K-Means
    & \textbf{0.0724} 
    & 0.0221 \\
    
    Text Embedding + Hierarchical K-Means 
    & 0.0660 
    & 0.0188 \\
    \bottomrule
    \end{tabular}
\end{table}

Table~\ref{tab:appx_sid_overlap} evaluates how effectively task-relevant information encoded in continuous item representations is preserved after discretization into SIDs. Specifically, we measure the overlap between SID tokens of target items and those of their historical interactions, which serves as a proxy for \emph{prefix-consistent relational structure} in the discrete space.

FAMAE-based representations consistently yield higher SID overlap ratios than text-based embeddings, indicating that FAMAE encodes interaction-aligned collaborative structures that survive the discrete bottleneck. From an information-theoretic perspective, this suggests lower reconstruction ambiguity $H(\mathbf{z} \mid c_l)$ and reduced reliance on long prefixes to disambiguate item semantics.

Furthermore, introducing global index alignment in GAOQ—particularly at deeper SID layers—further increases overlap. This demonstrates that GAOQ reduces prefix-dependent ambiguity by enforcing prefix-invariant code semantics, thereby preserving task-aligned structure learned in the E-stage. As a result, the resulting SIDs better reflect user interaction patterns and exhibit lower intrinsic uncertainty for autoregressive decoding.

\section{Understanding GAOQ Algorithm and Advantages}

\begin{figure}[t]
    \centering
    \includegraphics[width=0.9\linewidth]{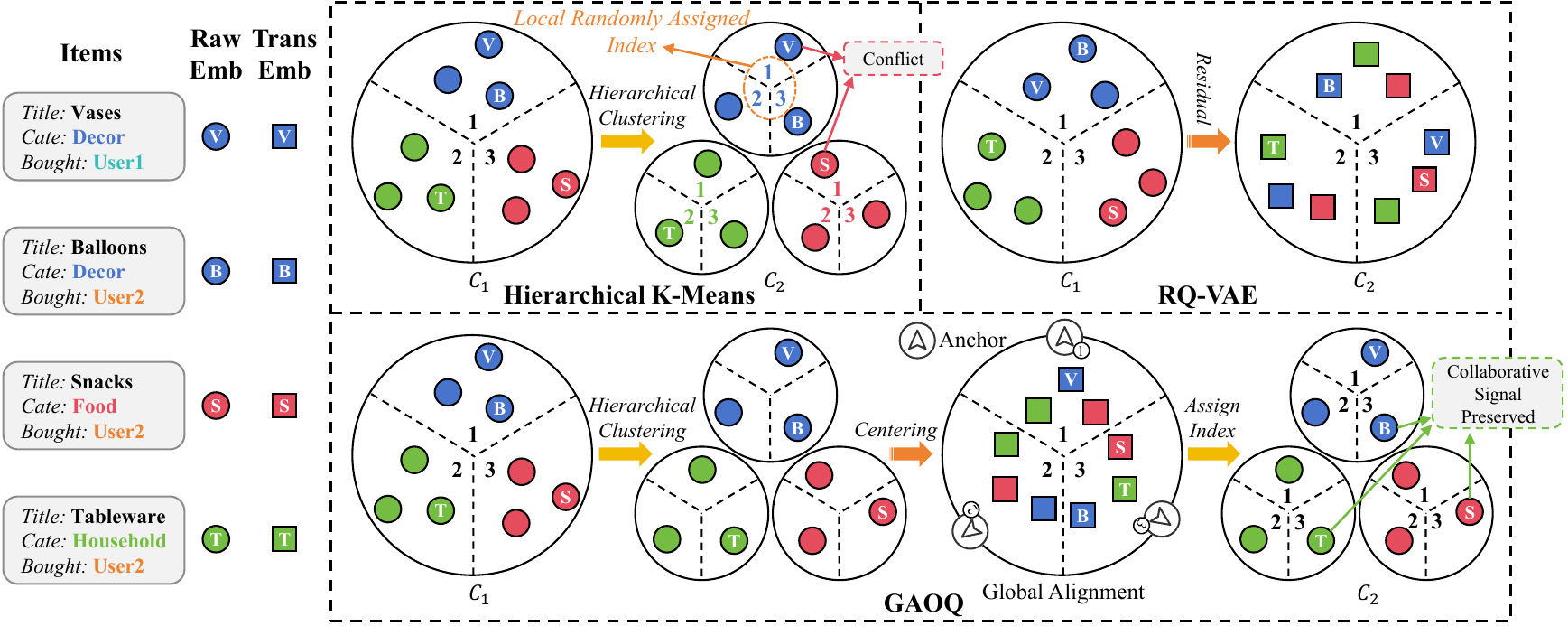}
    \vspace{-2mm}
    \caption{
    Comparison of SID construction strategies.
    \textbf{Hierarchical K-Means} assigns child indices locally and arbitrarily under each parent, with no explicit correspondence across different parents, resulting in non-coherent SIDs.
    \textbf{RQ-VAE} performs residual vector quantization and assigns codes globally and independently across stages, causing individual codes to lack consistent and meaningful semantic interpretation.
    \textbf{GAOQ} introduces a global alignment mechanism on top of Hierarchical K-Means, ensuring that codes have consistent meaning across prefixes while preserving collaborative signals.
    }
    \label{fig:appx_compare}
\end{figure}

\begin{algorithm}[t]
    \small
    \caption{GAOQ at Level $l$ (For a Given Parent at Level $l-1$)}
    \label{alg1}

    \begin{algorithmic}[1]
    \REQUIRE Set of item representations $\mathcal{Z}=\{\mathbf{z}_i\}_{i=1}^{n_p}$ that belong to the current parent node; centroid $\mu$ (the current parent embedding); branching factor $b_l$; number of anchors $g_l$ ($g_l \ge b_l$)
    \ENSURE Level-$l$ code assignment $\mathcal{M}:\mathcal{Z}\rightarrow\{1,\dots,g_l\}$ \COMMENT{mapping each item in $\mathcal{Z}$ to a globally aligned anchor index}
    
    \vspace{0.3em}
    \STATE \COMMENT{Partition (Balanced Clustering)}
    \STATE $(\{\mathcal{Z}_j\}_{j=1}^{b_l}, \{\mu_j\}_{j=1}^{b_l}) \gets \textsc{Balanced-KMeans}(\{\mathbf{z}_i \in\mathcal{Z}\}, b_l)$ 
    \COMMENT{$\{\mathcal{Z}_j\}_{j=1}^{b_l}$ is separated embedding sets of children clusters, $\{\mu_j\}_{j=1}^{b_l}$ is the centroid set of children clusters}
    
    \vspace{0.3em}
    \STATE \COMMENT{Residualization (Centering)}
    \FOR{$j = 1,\dots,b_l$}
        \STATE $\bar{\mu}_j \gets \mu_j - \mu$
    \ENDFOR
    
    \vspace{0.3em}
    \STATE \COMMENT{Anchor Construction}
    \STATE $\mathcal{A}=\{a_k\}_{k=1}^{g_l} \gets \textsc{Ortho}(g_l)$
    \COMMENT{Approximately orthonormal anchor generation function}
    
    \vspace{0.3em}
    \STATE \COMMENT{Global Alignment (Matching)}
    \STATE $W^{b_l\times g_l} \gets \cos(\bar{\mu}_j, a_k)$ for all $(j,k)$ pairs
    \STATE $\mathcal{W} \gets \textsc{Hungarian}(W)$
    \COMMENT{Injective assignment from children centers to anchors}
    
    \vspace{0.3em}
    \STATE \COMMENT{Code Assignment}
    \FOR{$j = 1,\dots,b_l$}
        \FORALL{item $i \in \mathcal{Z}_j$}
            \STATE $\mathcal{M}(i) \gets \mathcal{W}(j)\quad (\in \{1,\dots,g_l\})$ \COMMENT{anchor index lookup for children cluster $j$}
        \ENDFOR
    \ENDFOR
    
    \vspace{0.3em}
    \item[\textbf{RETURN}] $\mathcal{M}$
    \end{algorithmic}
\end{algorithm}

\begin{table}[t]
    \caption{
    Mean pairwise cosine similarity of centered embedding directions for items sharing the same code at second level. 
    Lower values indicate greater directional diversity within the same code.
    }
    \label{tab:appx_code2_cosine}

    \centering
    
    \begin{tabular}{lcccc}
    \toprule
    Method & Musical Instruments & Video Games & Baby Products & Beauty and Personal Care \\
    \midrule
    GAOQ & 0.0463 & 0.0567 & 0.0570 & 0.0524 \\
    Hierarchical K-Means & 0.0172 & 0.0154 & 0.0086 & 0.0053 \\
    \bottomrule
    \end{tabular}
\end{table}

Concretely, at each quantization level, GAOQ first partitions a parent cluster into child clusters using balanced K-Means. It then computes \emph{residual child vectors} by centering each child centroid with respect to its parent centroid. These residuals are matched to a \emph{globally shared} set of anchor embeddings. The anchors are constructed to be approximately orthonormal by maximizing inter-anchor cosine separation, providing a uniform and non-overlapping reference basis. A one-to-one assignment between child clusters and anchors is obtained by maximizing cosine similarity under an injective constraint, solved via the Hungarian algorithm (Algorithm~\ref{alg1}).

Figure~\ref{fig:appx_compare} provides an intuitive comparison (a toy showcase) between GAOQ and prior hierarchical quantization schemes. Consider four items: \emph{Vases} (purchased by User1) and \emph{Balloons}, \emph{Snacks}, \emph{Tableware} (co-purchased by User2). In the embedding space, \emph{Vases} and \emph{Snacks} are far apart due to distinct semantic content, while the latter three items exhibit strong collaborative proximity. 

Under standard Hierarchical K-Means with local indexing, child codes are assigned independently within each parent. As a result, \emph{Vases} and \emph{Snacks} may share the same second-level code (e.g., ``1''), while items co-purchased by User2 receive different codes. This breaks prefix-invariant semantics and causes collaborative structure captured in the embeddings to be discarded during discretization.

RQ-VAE applies residual transformations to the embedding space but assigns codes at each level independently across levels. As a result, codes across different levels are weakly correlated, and higher-level codes do not enforce consistent semantic or collaborative structure across prefixes.

In contrast, GAOQ enforces global alignment across all parent clusters. In the same example, GAOQ assigns a shared code (e.g., ``3'') to all items purchased by User2 and a different code (e.g., ``1'') to \emph{Vases}, preserving collaborative relationships in the resulting SIDs and reducing prefix-dependent ambiguity.

\subsection{Empirical Evidence on Reduced Indexing Ambiguity of GAOQ} \label{sec:GAOQ_lower_ambiguity}
In addition, Table~\ref{tab:appx_code2_cosine} provides evidence of reduced semantic ambiguity among items sharing the same code in GAOQ. Specifically, the average inter-item cosine similarity within each second-level code produced by Hierarchical K-Means is \textbf{3--10$\times$ lower} than that of GAOQ. This indicates that GAOQ yields more directionally coherent code groups, reflecting lower reconstruction ambiguity $H(\mathbf{z} \mid c_l)$ and more prefix-invariant code semantics as our analysis.

\section{Computational Complexity Analysis}
To provide a quantitative understanding of the computational cost of our framework, we analyze the FLOPs of each major stage in the pipeline: \textbf{FAMAE} for representation learning, \textbf{GAOQ} for quantization, and the downstream \textbf{T5}-based generative model. We report \emph{dominant FLOPs} in Big-$O$ form throughout this section, focusing on the leading multiply--accumulate operations and omitting constant factors and lower-order components (e.g., masking, normalization, and loss computation).

\subsection{FLOPs of FAMAE.} \label{sec:famae_flops}
Let $T_e$ denote the input sequence length of FAMAE, $J$ the number of structured features per item, $d_e$ the hidden size of the FAMAE Transformer encoder, and $L_e$ the number of encoder layers. We embed each feature into $\mathbb{R}^{d_e}$.

% before transformer, default: d == d_e (we can remove projection)
% embedding lookup (omit), sum-pooling fusion (T_e * (J - 1) * d), project embedding dim to hidden size (remove), position encoding (T_e * d_e)
% 
% transformer
% layernorm (omit)
% MHSA: QKV projection (T_e * d_e^2), attention score (T_e^2 * d_e), softmax (omit), attention output (T_e^2 * d_e), output projection (T_e * d_e^2), residual (omit), dropout (omit)
% FFN: d_ff = r*d_e, linear (T_e * d_e * d_ff), relu (omit), linear (T_e * d_ff * d_e), residual (omit), layernorm (omit), dropout (omit)
% 
% after transformer
% InfoNCE (omit)

\textbf{Feature fusion.}
At each position, sum-pooling $J$ field embeddings of dimension $d_e$ costs $\mathcal{O}(J d_e)$, and adding positional encoding costs $\mathcal{O}(d_e)$, yielding
\[
\mathcal{O}(T_e J d_e).
\]

\textbf{Transformer encoder.}
In each layer, the dominant FLOPs come from multi-head self-attention (MHSA) and the feed-forward network (FFN). For MHSA, the Q/K/V and output projections cost $\mathcal{O}(T_e d_e^2)$, and the attention score computation ($QK^\top$) and weighted sum ($AV$) cost $\mathcal{O}(T_e^2 d_e)$. For FFN with $d_{\mathrm{ff}} = r d_e$ (constant $r$), the two linear layers cost $\mathcal{O}(T_e d_e d_{\mathrm{ff}}) = \mathcal{O}(T_e d_e^2)$.
Therefore, the dominant FLOPs per encoder layer are
\[
\mathcal{O}(T_e^2 d_e + T_e d_e^2),
\]
and the total FLOPs of an $L_e$-layer encoder are
\[
\mathcal{O}(L_e (T_e^2 d_e + T_e d_e^2)).
\]

\textbf{Overall.}
Combining the above, the dominant FLOPs of FAMAE satisfy
\[
\text{FLOPs}_{\mathrm{FAMAE}} = \mathcal{O}(T_e J d_e + L_e (T_e^2 d_e + T_e d_e^2)).
\]

\subsection{FLOPs of GAOQ.}
Let $N$ be the number of items, $d_q$ the representation dimension fed into GAOQ, and $L_q$ the number of quantization levels. At level $l$ ($l$ is indexed from $1$), GAOQ uses branching factor $b_l$ and $I_l$ iterations for balanced K-Means. For $l \ge 2$, GAOQ additionally uses $g_l$ global anchors for alignment. Let $P_{l}=\prod_{i=1}^{l}b_i$ be the number of nodes at level $l$, and let $n_p$ be the number of items in a node $p$ with $\sum_{p=1}^{P_{l}} n_p = N$.

% before GAOQ, default: d_q == J*d == J*d_e
% get item embedding (omit)
% 
% GAOQ, input: {z_i in R^{d_q}}_{i=1}^N
% level 1 (no global alignment): balanced kmeans on N items
%   I_1 iterations:
%      init kmeans++ (omit), cluster assignment (N * b_1 * d_q), centroid update (N * d_q)
% 
% level >=2: for each parent node p (size n_p), run balanced kmeans + global alignment
%   I_l iterations, aggregated over all parents (sum_p n_p = N):
%     init kmeans++ (omit), cluster assignment: O(N * b_l * d_q), centroid update O(N * d_q)
%   residualization per parent (P_{l-1}):
%       residual (b_l * d_q)
%   anchor construction:
%       QR (g_l^2 * d_q)
%   global alignment per parent (P_{l-1}):
%       cosine sim matrix (b_l * g_l * d_q), hungarian (b_l^3)
%   code assignment in each parent node:
%       code assignment (omit)
% 
% after GAOQ
% get sid (omit)

\textbf{Partition (Balanced Clustering).}
For a given parent node $p$, the dominant cost of balanced K-Means is the assignment step that computes distances between $n_p$ points and $b_l$ centroids, costing $\mathcal{O}(n_p b_l d_q)$ per iteration. Thus, the clustering cost per parent is $\mathcal{O}(I_l n_p b_l d_q)$, and aggregating over all parent nodes of the current level $l$ yields
\[
\mathcal{O}(I_l N b_l d_q).
\]

\textbf{Residualization (Centering).}
For each parent node, we center child representations (sub-cluster centroids) $\bar{\mu}_j = \mu_j - \mu$ for $j=1,\dots,b_l$, which incurs $\mathcal{O}(b_l d_q)$ FLOPs. Since this operation is performed under all $P_{l-1}$ parent nodes at level $l-1$, the per-level cost is
\[
\mathcal{O}(P_{l-1} b_l d_q).
\]

\textbf{Anchor Construction.}
GAOQ constructs $g_l$ approximately orthonormal anchors once per level via QR decomposition on a $d_q \times g_l$ random matrix, costing
\[
\mathcal{O}(d_q g_l^2).
\]

\textbf{Global Alignment (Matching).}
For a parent node $p$, forming the cosine-similarity matrix $W \in \mathbb{R}^{b_l \times g_l}$ costs $\mathcal{O}(b_l g_l d_q)$, and solving the injective assignment via Hungarian costs $\mathcal{O}(b_l^3)$. Aggregating over $P_{l-1}$ parents, the matching cost per level is
\[
\mathcal{O}\big(P_{l-1}(b_l g_l d_q + b_l^3)\big).
\]

\textbf{Code Assignment.}
Assigning the matched anchor index to each item is an index operation and is negligible under our dominant-FLOPs accounting.

\textbf{Per-level and Overall.}
For $l \ge 2$, combining the above stages, the dominant FLOPs at level $l$ satisfy
\[
\mathcal{O}\big(I_l N b_l d_q + P_{l-1} b_l d_q + d_q g_l^2 + P_{l-1}(b_l g_l d_q + b_l^3)\big).
\]
Level $1$ does not use anchors or matching, and only incurs the clustering cost $\mathcal{O}(I_1 N b_1 d_q)$. Therefore, the dominant FLOPs of GAOQ satisfy
\[
\text{FLOPs}_{\mathrm{GAOQ}}
= \mathcal{O}\Big(
\sum_{l=1}^{L_q} I_l N b_l d_q
+
\sum_{l=2}^{L_q}\big[
P_{l-1} b_l d_q + d_q g_l^2 + P_{l-1}(b_l g_l d_q + b_l^3)
\big]
\Big).
\]

\subsection{FLOPs of the Downstream T5 Model.}
Let $T_g^{\mathrm{enc}}$ and $T_g^{\mathrm{dec}}$ denote the encoder input length and decoder output length of the downstream T5 model, respectively. We use a shared hidden size $d_g$ for both encoder and decoder, and set the FFN intermediate dimension as $d_{\mathrm{ff},g} = r_g d_g$ with a constant expansion ratio $r_g$. Let $L_g^{\mathrm{enc}}$ and $L_g^{\mathrm{dec}}$ be the numbers of encoder and decoder layers.

\textbf{Transformer encoder.}
As in Section~\ref{sec:famae_flops}, the dominant FLOPs of a Transformer encoder layer are contributed by multi-head self-attention (MHSA) and the feed-forward network (FFN). For MHSA, the Q/K/V and output projections cost $\mathcal{O}(T_g^{\mathrm{enc}} d_g^2)$, and the attention score computation ($QK^\top$) and weighted sum ($AV$) cost $\mathcal{O}((T_g^{\mathrm{enc}})^2 d_g)$. For FFN, the two linear layers cost $\mathcal{O}(T_g^{\mathrm{enc}} d_g d_{\mathrm{ff},g}) = \mathcal{O}(T_g^{\mathrm{enc}} d_g^2)$. Therefore, the total FLOPs of the $L_g^{\mathrm{enc}}$-layer encoder are
\[
\mathcal{O}\big(L_g^{\mathrm{enc}}\big((T_g^{\mathrm{enc}})^2 d_g + T_g^{\mathrm{enc}} d_g^2\big)\big).
\]

\textbf{Transformer decoder.}
Each decoder layer contains (i) causal self-attention, (ii) encoder--decoder cross-attention, and (iii) an FFN. Causal self-attention has the same dominant FLOPs form as encoder self-attention, yielding $\mathcal{O}((T_g^{\mathrm{dec}})^2 d_g + T_g^{\mathrm{dec}} d_g^2)$ per layer. For cross-attention, projections cost $\mathcal{O}(T_g^{\mathrm{dec}} d_g^2 + T_g^{\mathrm{enc}} d_g^2)$, and the attention score computation and weighted sum cost $\mathcal{O}(T_g^{\mathrm{dec}} T_g^{\mathrm{enc}} d_g)$. The FFN costs $\mathcal{O}(T_g^{\mathrm{dec}} d_g d_{\mathrm{ff},g}) = \mathcal{O}(T_g^{\mathrm{dec}} d_g^2)$. Combining these terms, the total FLOPs of the $L_g^{\mathrm{dec}}$-layer decoder are
\[
\mathcal{O}\Big(L_g^{\mathrm{dec}}\big((T_g^{\mathrm{dec}})^2 d_g + T_g^{\mathrm{dec}} T_g^{\mathrm{enc}} d_g + (T_g^{\mathrm{dec}} + T_g^{\mathrm{enc}}) d_g^2\big)\Big).
\]

\textbf{Overall.}
Combining encoder and decoder, the dominant FLOPs of the downstream T5 model satisfy
\[
\text{FLOPs}_{\mathrm{T5}}
=
\mathcal{O}\Big(
L_g^{\mathrm{enc}}\big((T_g^{\mathrm{enc}})^2 d_g + T_g^{\mathrm{enc}} d_g^2\big)
+
L_g^{\mathrm{dec}}\big((T_g^{\mathrm{dec}})^2 d_g + T_g^{\mathrm{dec}} T_g^{\mathrm{enc}} d_g + (T_g^{\mathrm{dec}} + T_g^{\mathrm{enc}}) d_g^2\big)
\Big).
\]

\subsection{Overall FLOPs.}
The dominant FLOPs of \textbf{ReSID} are the sum of the costs of its three stages:
\[
\text{FLOPs}_{\mathrm{ReSID}} = \text{FLOPs}_{\mathrm{FAMAE}} + \text{FLOPs}_{\mathrm{GAOQ}} + \text{FLOPs}_{\mathrm{T5}}.
\]

\end{document}